\renewcommand{\phi}{\varphi}
\newcommand{\nats}{\mathbb{N}}
\renewcommand{\implies}{\rightarrow}
\newcommand{\dual}[1]{\overline{#1}}
\newcommand{\coloneq}{\mathop{:=}}
\newcommand{\cceq}{\mathop{::=}}
\newcommand{\F}{\mathbf{F}}
\newcommand{\G}{\mathbf{G}}
\newcommand{\U}{\mathbf{U}}
\newcommand{\X}{\mathbf{X}}
\newcommand{\R}{\mathbf{R}}
\newcommand{\true}{\mathbf{tt}}
\newcommand{\false}{\mathbf{ff}}
\newcommand{\Xvar}{\mathcal{X}}
\newcommand{\Yvar}{\mathcal{Y}}
\newcommand{\ltl}{\mathrm{LTL}}
\newcommand{\pltlf}{\pltl_{\F}}
\newcommand{\pltlg}{\pltl_{\G}}
\newcommand{\pltl}{\mathrm{PLTL}}
\newcommand{\var}{\mathrm{var}}
\newcommand{\prompt}{\mathrm{PROMPT}$\textendash$\ltl}
\newcommand{\aut}{\mathfrak{A}}
\newcommand{\cl}{\mathrm{cl}}
\newcommand{\autp}{\mathfrak{P}}
\newcommand{\acc}{\mathrm{Acc}}
\newcommand{\curlyF}{\mathcal{F}}
\newcommand{\curlyC}{\mathcal{C}}
\newcommand{\bigo}{\mathcal{O}}
\newcommand{\W}{\mathcal{W}}
\newcommand{\game}{\mathcal{G}}
\newcommand{\arena}{\mathcal{A}}
\newcommand{\mem}{\mathcal{M}}
\newcommand{\update}{\mathrm{upd}}
\newcommand{\nextmove}{\mathrm{nxt}}
\newcommand{\pspace}{\textbf{\textsc{Pspace}}}
\newcommand{\twoexp}{\textbf{\textsc{2Exptime}}}
\newtheorem{theorem}{Theorem}
\newtheorem{lemma}[theorem]{Lemma}
\newtheorem{corollary}[theorem]{Corollary}
\newtheorem{remark}[theorem]{Remark}
\newtheorem{construction}[theorem]{Construction}
\newtheorem{proposition}[theorem]{Proposition}
\title{Optimal Bounds in Parametric LTL Games
\thanks{The author's work was supported by the project
{\it Games for Analysis and Synthesis of Interactive Computational Systems
(GASICS)} of the {\it European Science Foundation}.}}
\author{Martin Zimmermann
\institute{Lehrstuhl Informatik 7\\
 RWTH Aachen University, Germany}
\email{zimmermann@automata.rwth-aachen.de}
}
\begin{document}
\maketitle

\begin{abstract}
We consider graph games of infinite duration with winning conditions
in parameterized linear temporal logic, where the temporal operators are equipped with
variables for time bounds. In model checking such specifications were introduced
as ``PLTL'' by Alur et al.\ and (in a different version called ``PROMPT-LTL'') by Kupferman et al..

We present an algorithm to determine optimal variable valuations that allow a player to win a game. Furthermore,
we show how to determine whether a player wins a game
with respect to some, infinitely many, or all valuations. All our
algorithms run in doubly-exponential time; so, adding bounded temporal operators does not
increase the complexity compared to solving plain LTL games.
\end{abstract}

\section{Introduction}

Many of todays problems in computer science are no longer concerned with programs that transform data and then terminate, but with non-terminating systems. Model-checking, the automated verification of closed systems (those that do not have to interact with an environment), is nowadays routinely performed in industrial settings. For open system (those that have to interact with a possibly antagonistic environment), the framework of infinite two-player games is a powerful and flexible tool to verify and synthesize such systems.
A crucial aspect of automated verification is the choice of a specification formalism, which should be simple enough to be used by practitioners without formal training in automata theory or logics. Here, \emph{Linear Temporal Logic} ($\ltl$) has turned out to be an expressive, but easy to use formalism: its advantages include a compact, variable-free syntax and intuitive semantics. For example, the specification ``every request $q$ is answered by a response $p$'' is expressed by $\phi=\G(q\rightarrow \F p)$.

However, $\ltl$ lacks capabilities to express timing constraints, e.g., it cannot express that every request is answered within an unknown, but fixed number of steps.  Also, in an infinite game with winning condition $\phi$, Player $0$ might have two winning strategies, one that answers every request within $m$ steps, and another one that takes $n$ steps, for some $n>m$. The first strategy is clearly preferable to the second one, but there is no guarantee that the first one is indeed computed, when the game is solved.

To overcome these shortcomings, several parameterized temporal logics~\cite{AE01, GLN10, KPV09} where introduced for the verification of closed systems: here one adds parametric bounds on the temporal operators. We are mainly concerned with \emph{Parametric Linear Temporal Logic} ($\pltl$)~\cite{AE01}, which adds the operators $\F_{\le x}$ and $\G_{\le y}$ to $\ltl$. In $\pltl$, the request-response specification is expressed by $\G(q\rightarrow \F_{\le x} p)$, stating that every request is answered within the next $x$ steps, where $x$ is a variable. Hence, satisfaction of a formula is defined with respect to a variable valuation $\alpha$ mapping variables to natural numbers: $\F_{\le x}\phi$ holds, if $\phi$ is satisfied
within the next $\alpha(x)$ steps, while $\G_{\le y}\phi$ holds, if $\phi$ is
satisfied for the next $\alpha(y)$ steps.

The model-checking problem for a parameterized temporal logic is typically no harder than the model-checking problem for the unparameterized fragment, e.g., deciding whether a transition system satisfies a $\pltl$ formula with respect to some, infinitely many, or all variable valuations is $\pspace$-complete~\cite{AE01}, as is $\ltl$ model-checking~\cite{SC85}. Similar results hold for parameterized real-time logics~\cite{GLN10}. Also, for $\pltl$ one can determine optimal variable valuations for which a formula is satisfied by a given transition system in polynomial space.

In this work, we consider infinite games with winning conditions in $\pltl$, i.e., we lift the results on model-checking parameterized specifications to synthesis of open systems from parameterized specifications. Our starting point is a result on the fragment of $\pltl$ containing only parameterized eventualities, which was discussed (in a different version called $\prompt$) in \cite{KPV09}: there, the authors show that the realizability problem (an abstract notion of a game) for $\prompt$ is $\twoexp$-complete. We use this result to solve infinite games with winning conditions in the full logic with parameterized eventualities and always': determining whether a player wins a $\pltl$ game with respect to some, infinitely many, or all variable valuations is also $\twoexp$-complete, as is determining the winner of an $\ltl$ game~\cite{PR89a}. So, we observe the same phenomenon as in model-checking: the addition of parameterized operators does not increase the computational complexity of the problem.

After establishing these results, we consider the problem of finding optimal variable valuations that allow a given player to win the game. If a winning condition contains only parameterized eventualities or only parameterized always', then it makes sense to ask for an optimal valuation that a player can enforce against her opponent and for a winning strategy realizing the optimum. Our main theorem states that this optimization problem can be solved in doubly-exponential time; so even determining an optimal winning strategy for such a game is of the same computational complexity as solving unparameterized games.

The remainder of this paper is structured as follows: in Section~\ref{sec_def}, we introduce infinite games with winning conditions in parameterized linear temporal logic and fix our notation. In Section~\ref{sec_solv}, we show how the result on the $\prompt$ realizability problem can be used to show that determining whether a player wins a $\pltl$ game with respect to some, infinitely many, or all variable valuations can be decided in doubly-exponential time. In Section~\ref{sec_opt}, we use these results to determine optimal winning strategies in games for which a notion of optimality can be defined. Finally, Section~\ref{sec_conc} gives a short conclusion.
\section{Definitions}
\label{sec_def}
The set of non-negative integers is denoted by $\nats$, the set of positive
integers by $\nats_+$. The powerset of a
set $S$ is denoted by $2^S$. Throughout this paper let $P$ be a set
of {atomic propositions}.\smallskip


\textbf{Automata.}
An {(non-deterministic) $\omega$-automaton} $\aut=(Q,\Sigma,Q_0,\Delta,\acc)$ consists of
a finite set of states $Q$, an alphabet $\Sigma$, a set of initial states $Q_0\subseteq Q$, a transition relation
$\Delta\subseteq Q\times\Sigma\times Q$, and an acceptance condition $\acc$.
An $\omega$-automaton is {deterministic}, if $|Q_0|=1$ and for every $(q,a)\in Q\times\Sigma$, there is exactly one $q'$
such that $(q,a,q')\in\Delta$. In this case, we denote $Q_0=\{q_0\}$ by $q_0$ and $\Delta$ as function $\delta\colon Q\times\Sigma\rightarrow Q$.
The size of $\aut$, denoted by $|\aut|$, is the cardinality of $Q$.
A {run} of $\aut$ on an $\omega$-word $w\in\Sigma^{\omega}$ is an infinite sequence of
states $q_0q_1q_2\dots$ such that $q_0\in Q_0$ and $(q_n,w_n,q_{n+1})\in\Delta$
for every $n\in\nats$.
We consider different {acceptance conditions} $\acc$ for
$\omega$-automata:
(1)~{B\"uchi automata} with a set of {accepting states} $F\subseteq Q$. A run $q_0q_1q_2\ldots$ is accepting
if there are infinitely many  $n$ such that $q_n\in F$.
(2)~{Generalized B\"uchi automata} with a family of sets of accepting states $\curlyF\subseteq 2^Q$. A run $q_0q_1q_2\ldots$ is accepting
if for every $F\in\curlyF$ there are infinitely many $n$ such that $q_n\in F$.
(3)~{Parity automata} with a {priority function} $c\colon Q\rightarrow \nats$. A run $q_0q_1q_2\ldots$ is accepting, if
the minimal priority seen infinitely often is even.
An $\omega$-word is {accepted} by an $\omega$-automaton, if there exists an accepting run on it. The {language} $L(\aut)$ of $\aut$ contains the $\omega$-words accepted by $\aut$.
An $\omega$-automaton is called {unambiguous}, if it has at most one accepting run on every $\omega$-word $w\in\Sigma^{\omega}$.
It is called {non-confluent}, if for every $\omega$-word $w$ and two runs $q_0q_1q_2\dots$ and $q_0'q_1'q_2'\dots$ on $w$ we have for all $n$ that if $q_n=q_n'$, then $q_{m}=q_{m}'$ for every $m<n$. In a non-confluent $\omega$-automaton with $n$ states, every finite prefix of an $\omega$-word has at most $n$ finite runs, all of which can be uniquely identified by their last state.
Finally, a state of an $\omega$-automaton is {unproductive}, if it is not reachable from the initial state or if there is no accepting run starting from this state. Removing all unproductive states from a (generalized) B\"uchi or parity automaton does not change its language.

\begin{remark}
An unambiguous (generalized) B\"uchi or parity automaton without unproductive states is non-confluent.
\end{remark}\medskip

\textbf{Linear Temporal Logics.}
Let $\Xvar$ and $\Yvar$ be two disjoint sets of {variables}\footnote{If the sets of variables are not disjoint, already the model-checking problem for $\pltl$ is undecidable~\cite{AE01}.}. The formulae of
{Parametric Linear Temporal Logic} ($\pltl$)~\cite{AE01} are given by the
grammar
\[\phi\cceq p\mid\neg
p\mid\phi\wedge\phi\mid\phi\vee\phi\mid\X\phi\mid\phi\U\phi\mid\phi\R\phi\mid \F_{\le x}\phi\mid\G_{\le y}\phi\enspace,\]
where $p\in P$, $x\in\Xvar$ and $y\in\Yvar$. Also, we use the derived operators $\true\coloneq
p\vee\neg p$ and $\false\coloneq p\wedge\neg p$ for some fixed $p\in P$,
$\F\phi\coloneq\true\U\phi$, and $\G\phi\coloneq\false\R\phi$
\footnote{In~\cite{AE01}, the authors also introduced the operators $\U_{\le
x}$, $\R_{\le y}$, $\F_{> y}$, $\G_{> x}$, $\U_{> y}$, and $\R_{> x}$. However,
they showed that all these operators can be expressed using $\F_{\le x}$ and
$\G_{\le y}$ only, at the cost of a linear increase of the formula's size. Also,
we ignore constant bounds as they do not add expressiveness.}. The set of variables
occurring in $\phi$ is denoted by $\var(\phi)$ and defined in the obvious way.
The size $|\phi|$ of a formula $\phi$ is measured by counting the distinct subformulae of $\phi$.
We consider several fragments of $\pltl$: $\phi$ is an $\ltl$ formula, if
$\var(\phi)=\emptyset$; $\phi$ is a $\prompt$ formula~\cite{KPV09}, if $\var(\phi)$ is a
subset of $\Xvar$ of cardinality at most one; $\phi$ is a $\pltlf$ formula, if $\var(\phi)\subseteq \Xvar$; and
$\phi$ is a $\pltlg$ formula, if $\var(\phi)\subseteq \Yvar$. A formula in $\pltlf$ or $\pltlg$
is called {unipolar}.
The semantics of $\pltl$ is defined with
respect to an $\omega$-word $w\in\left(2^{P}\right)^{\omega}$, a position
$i\in\nats$, and a {variable valuation}
$\alpha\colon \Xvar\cup\Yvar\rightarrow\nats$ as follows:
\begin{itemize}
\item $(w,i,\alpha)\models p$ iff $p\in w_i$ and $(w,i,\alpha)\models\neg p$ iff $p\notin
w_i$,
\item $(w,i,\alpha)\models\phi\wedge\psi$ iff $(w,i,\alpha)\models\phi$ and
$(w,i,\alpha)\models\psi$,
\item $(w,i,\alpha)\models\phi\vee\psi$ iff $(w,i,\alpha)\models\phi$ or $(w,i,\alpha)\models\psi$,
\item $(w,i,\alpha)\models\X\phi$ iff $(w,i+1,\alpha)\models\phi$,
\item $(w,i,\alpha)\models\phi\U\psi$ iff there exists a $j\ge 0$ such that
$(w,i+j,\alpha)\models\psi$ and $(w,i+j',\alpha)\models\phi$ for all $j'$ in the range $0\le j'<j$,
\item $(w,i,\alpha)\models\phi\R\psi$ iff for all $j\ge 0$: either $(w,i+j,\alpha)\models\psi$
or there exists a $j'$ in the range $0\le j'<j$ such that $(w,i+j',\alpha)\models\phi$,
\item $(w,i,\alpha)\models\F_{\le x}\phi$ iff there exists a $j$ in the range
$0\le j \le\alpha(x)$ such that $(w,i+j,\alpha)\models\phi$, and
\item $(w,i,\alpha)\models\G_{\le y}\phi$ iff for all $j$ in the
range $0\le j \le\alpha(y)$: $(w,i+j,\alpha)\models\phi$.
\end{itemize}
As the satisfaction of an $\ltl$ formula $\phi$ is independent of the variable valuation $\alpha$, we omit
$\alpha$ and write $(w,i)\models\phi$ instead of $(w,i,\alpha)\models\phi$.
$\pltl$ and $\ltl$ (but not the fragments $\prompt$, $\pltlf$ and $\pltlg$) are
closed under negation, although we only allow formulae in negation normal form.
This is due to the duality of $\U$ and $\R$, and $\F_{\le x}$ and $\G_{\le y}$.
Thus, we use $\neg\phi$ as shorthand for the equivalent formula obtained by
pushing the negation to the atomic propositions. Note that the negation of a $\pltlf$
formula is a $\pltlg$ formula and vice versa.

\begin{remark}
\label{rem_transl}
For every $\pltl$ formula $\phi$ and every valuation $\alpha$, there exists an
$\ltl$ formula $\phi_{\alpha}$ such that for every $w\in\left(2^{P}\right)^\omega$ and every $i\in\nats$: $(w,i,\alpha)\models\phi$ if and only if
$(w,i)\models\phi_{\alpha}$.
\end{remark}
This can be shown by replacing the parameterized operators by disjunctions or
conjunctions of nested next-operators. The size of $\phi_{\alpha}$
grows linearly  in $\sum_{z\in\var(\phi)}\alpha(z)$. Due to Remark~\ref{rem_transl}, we do not consider a fixed variable valuation when defining games with winning conditions in $\pltl$, but ask whether Player~$0$ can win a game with winning condition $\phi$
with respect to some, infinitely many, or all valuations.\medskip


\textbf{Infinite Games.}
An {(initialized and labeled) arena} $\mathcal{A}=(V,V_0,V_1,E,v_0,\ell)$ consists of a
finite directed graph $(V,E)$, a partition $\{V_0, V_1\}$ of $V$ denoting the
positions of {Player~$0$} and {Player~$1$},
an {initial vertex}
$v_0\in V$, and a {labeling function} $\ell\colon V\rightarrow 2^P$. The
size $|\mathcal{A}|$ of $\mathcal{A}$ is $|V|$. It is assumed
that every vertex has at least one outgoing edge. A {play}
$\rho=\rho_0\rho_1\rho_2\ldots$ is an infinite path starting in $v_0$. The
{trace} of $\rho$ is $t(\rho)=\ell(\rho_0)\ell(\rho_1)\ell(\rho_2)\ldots$.
A {strategy for Player~$i$} is a mapping
$\sigma\colon V^*V_i\rightarrow V$ such that $(\rho_n,\sigma(\rho_0\ldots\rho_n))\in
E$ for all $\rho_0\ldots\rho_n\in V^*V_i$. A play $\rho$ is
{consistent with $\sigma$} if $\rho_{n+1}=\sigma(\rho_0\ldots\rho_n)$ for
all $n$ with $\rho_n\in V_i$.

A {parity game} $\game=(\arena,c)$ consists of an arena $\arena$ and a priority function
$c\colon V\rightarrow\nats$. Player~$0$ wins a play $\rho_0\rho_1\rho_2\ldots$
if the minimal priority seen infinitely often is even. The number of priorities
of $\game$
is $|c(V)|$. A strategy $\sigma$ for Player~$i$ is winning for her, if every play
that is consistent with $\sigma$ is won by her. Then, we say Player~$i$ wins $\game$.

A {$\pltl$ game} $\game=(\arena,\phi)$ consists of an arena $\arena$ and a $\pltl$
formula~$\phi$. Player~$0$ {wins} a play $\rho$ {with respect to a variable
valuation $\alpha$} if $(t(\rho),0,\alpha)\models\phi$, otherwise Player~$1$ {wins
$\rho$ with respect to $\alpha$}. A strategy for Player~$i$ is a {winning strategy for her with respect to
$\alpha$} if every play that is consistent with $\sigma$ is won by Player~$i$
with respect to $\alpha$. Then, we say that Player~$i$ {wins $\game$ with respect
to $\alpha$}. We define
the set $\W_{\game}^i$ of {winning valuations} for Player~$i$ in $\game=(\arena,\phi)$ by
$\W_{\game}^i=\{\alpha\mid\text{Player~$i$ wins $\game$ with respect to
$\alpha$}\}$. Here (and from now on) we assume that $\alpha$'s domain
is restricted to the variables occurring in $\phi$. $\ltl$, $\prompt$, $\pltlf$, $\pltlg$, and
unipolar games are defined by restricting the winning conditions to $\ltl$, $\pltlf$,
$\pltlg$, and unipolar formulae. Again, winning an $\ltl$ game is independent of $\alpha$, hence we are justified to
say that {Player~$i$ wins an $\ltl$ game}.\medskip

\textbf{Strategies with Memory.} A {memory structure $\mem=(M,m_0,\update)$
for an arena $(V,V_0,V_1,E,v_0,\ell)$ consists of a finite set
$M$ of {memory states}, an {initial memory state} $m_0\in M$, and an
{update function} $\update\colon M\times V\rightarrow M$, which can
be extended to $\update^*\colon V^+\rightarrow M$ by
$\update^*(\rho_0)=m_0$ and
$\update^*(\rho_0\ldots\rho_n\rho_{n+1})=\update(\update^*(\rho_0\ldots\rho_n),
\rho_{n+1})$. A {next-move function for Player~$i$} is a function
$\nextmove\colon V_i\times M\rightarrow V$ that satisfies
$(v,\nextmove(v,m))\in E$ for all $v\in V_i$ and
all $m\in M$. It induces a {strategy $\sigma$ with memory $\mem$} via
$\sigma(\rho_0\ldots\rho_n)=\nextmove(\rho_n,\update^*(\rho_0\ldots\rho_n))$. A
strategy is called {finite-state} if it can be implemented with a
memory structure, and {positional} if it can be implemented with a single memory state.
The {size} of $\mem$ (and, slightly abusive, $\sigma$) is $|M|$.
An arena $\arena$ and a memory structure $\mem=(M,m_0,\update)$ for $\arena$
induce the expanded
arena $\arena\times\mem=(V\times M, V_0\times M, V_1\times M, E', (s_0, m_0),
\ell_{\arena\times\mem})$ where $((s,m),(s',m'))\in E'$ if and only if $(s,s')\in E$ and
$\update(m,s')=m'$, and $\ell_{\arena\times\mem}(s,m)=l(s)$.
A game $\game$ with
arena $\arena$ is reducible to $\game'$ with arena $\arena'$ via $\mem$, written
$\game\le_{\mem}\game'$, if $\arena'=\arena\times\mem$ and every play $(\rho_0,m_0)(\rho_1,m_1)(\rho_2,m_2)\ldots$ in
$\game'$ is won by the player who wins the projected play $\rho_0\rho_1\rho_2\ldots$ in $\game$.

\begin{remark}
\label{rem_reductions}
If $\game\le_{\mem}\game'$ and Player~$i$ has a positional winning
strategy for $\game'$, then she also has a finite-state winning strategy with
memory $\mem$ for $\game$.
\end{remark}

A parity game or a $\pltl$ game $\game$ (with respect to a fixed variable valuation) cannot
be won by both players. On the other hand, $\game$ is {determined}, if one of the players wins it.

\begin{proposition}
\label{prop_det}
\hfill
\begin{enumerate}
\item\label{prop_det_parity}
Parity games are determined with positional
strategies~\cite{EJ91, M91} and the winner can be determined in time $\bigo( m(n/d)^{\lceil d/2\rceil}) $~\cite{J00}, where $n$, $m$, and $d$ denote the number of vertices,
edges, and priorities.

\item\label{prop_det_ltl} $\ltl$ games (and therefore also $\pltl$ games with respect to a fixed variable
valuation) are determined with finite-state strategies.
Determining the winner is $\twoexp$-complete~\cite{PR89} and finite-state
winning strategies can be computed in doubly-exponential time.
\end{enumerate}
\end{proposition}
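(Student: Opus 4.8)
The two items are handled separately. For item~\ref{prop_det_parity} there is nothing to prove beyond invoking the cited results: positional determinacy of parity games is the classical theorem of Emerson--Jutla~\cite{EJ91} and Mostowski~\cite{M91}, and the running time $\bigo(m(n/d)^{\lceil d/2\rceil})$ is exactly Jurdzi\'nski's bound~\cite{J00}.

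For item~\ref{prop_det_ltl}, the plan is to reduce the $\ltl$ game to a parity game and then apply item~\ref{prop_det_parity}. First I would translate the winning condition $\phi$ into a deterministic parity automaton $\autphi$ over $2^P$ accepting exactly the traces satisfying $\phi$. This proceeds in two stages: an $\ltl$-to-nondeterministic-B\"uchi translation costing a single exponential in $|\phi|$, followed by a determinization into a deterministic parity automaton costing a second exponential. Hence $\autphi$ has doubly-exponentially many states and exponentially many priorities in $|\phi|$.

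Next I would read $\autphi$ as a memory structure $\mem$ for the arena $\arena$: the memory states are the automaton states, the initial memory state is its initial state, and the update function applies the transition function to the label of the current vertex. Endowing the expanded arena $\arena\times\mem$ with the priority function inherited from $\autphi$ yields a parity game $\game'$. By construction a play of $\game'$ tracks the unique run of $\autphi$ on the trace of the projected play, so $\game'$ is won by exactly the player who wins the projection; that is, $\game\le_{\mem}\game'$. By item~\ref{prop_det_parity}, $\game'$ is determined with positional strategies and its winner is computable in time polynomial in $|\game'|$; since $|\game'|$ is polynomial in $|\arena|$ and $|\autphi|$, and $|\autphi|$ is doubly-exponential in $|\phi|$, this is doubly-exponential time overall. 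Remark~\ref{rem_reductions} then transfers the positional winning strategy in $\game'$ to a finite-state winning strategy with memory $\mem$ in the $\ltl$ game, giving both determinacy with finite-state strategies and their computability in doubly-exponential time. The matching $\twoexp$ lower bound is Pnueli and Rosner's result~\cite{PR89}, which I would simply cite, and the extension to $\pltl$ games for a fixed valuation $\alpha$ is immediate from Remark~\ref{rem_transl} by replacing $\phi$ with the equivalent $\ltl$ formula $\phi_{\alpha}$.

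The crux is the determinization of the nondeterministic B\"uchi automaton into a deterministic parity automaton, which is where the second exponential blowup arises (via Safra's construction or a variant) and which makes the procedure doubly- rather than singly-exponential. The remaining work is routine: arranging the memory structure so that $\game\le_{\mem}\game'$ holds, and checking the polynomial dependence of $|\game'|$ on the sizes of the arena and the automaton.
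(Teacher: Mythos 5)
Your proposal is correct and follows what is essentially the paper's own route: the paper states this proposition as a bundle of cited results and gives no separate proof, but the reduction you describe—translate $\phi$ (or $\phi_\alpha$, via Remark~\ref{rem_transl}) into a deterministic parity automaton, read that automaton as a memory structure $\mem$ with $\update(q,v)=\delta(q,\ell(v))$, form the product parity game $(\arena\times\mem,c')$ with $c'(v,q)=c(q)$, solve it by item~\ref{prop_det_parity}, and pull the positional strategy back through Remark~\ref{rem_reductions}—is exactly the machinery the paper deploys in Section~\ref{sec_opt} above Lemma~\ref{lem_autsize}, with the lower bound cited from Pnueli--Rosner just as you do.

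One inaccuracy to fix in your complexity accounting: you claim the winner of $\game'$ is computable ``in time polynomial in $|\game'|$'' by item~\ref{prop_det_parity}. That is not what item~\ref{prop_det_parity} gives; the bound $\bigo\bigl(m(n/d)^{\lceil d/2\rceil}\bigr)$ is exponential in the number of priorities $d$, and parity games are not known to be solvable in polynomial time. Your conclusion nevertheless survives, because the determinization yields $n\in 2^{2^{\bigo(|\phi|)}}$ states but only $d\in 2^{\bigo(|\phi|)}$ priorities, so $m(n/d)^{\lceil d/2\rceil}\in 2^{2^{\bigo(|\phi|)}\cdot 2^{\bigo(|\phi|)}}=2^{2^{\bigo(|\phi|)}}$, which is still doubly exponential (after multiplying by the polynomial dependence on $|\arena|$). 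This is precisely the accounting the paper performs when it invokes Proposition~\ref{prop_det}.\ref{prop_det_parity} for the game $(\arena\times\mem,c')$ in the proof of Theorem~\ref{thm_pltlopt}; you should make the dependence on $d$ explicit rather than asserting polynomial-time solvability.
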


\section{Solving Prompt and PLTL Games}
\label{sec_solv}

In this section, we consider several decision problems for $\pltl$ games. Kupferman et al.\
solved the $\prompt$ realizability problem\footnote{An abstract game without underlying arena in which two players alternatingly pick letters from $2^{P}$. The first player wins if the $\omega$-word produced by the players satisfies the winning condition $\phi$.} by a reduction to the $\ltl$ realizability problem~\cite{KPV09}, which is complete for doubly-exponential time. We show that this result suffices to prove that even the
decision problems for the full logic with non-uniform bounds and parameterized always-operators are in $\twoexp$. 
For games with winning conditions in $\pltl$ we are interested in the following decision problems:\medskip

\noindent \textbf{Membership:} Given a  $\pltl$ game $\game$, $i\in\{0,1\}$, and a valuation $\alpha$, does $\alpha\in\W_{\game}^i$
hold?

\noindent \textbf{Emptiness:} Given a $\pltl$ game $\game$ and $i\in\{0,1\}$, is $\W_{\game}^i$ empty?

\noindent \textbf{Finiteness:} Given a $\pltl$ game $\game$ and $i\in\{0,1\}$, is $\W_{\game}^i$ finite?

\noindent \textbf{Universality:} Given a $\pltl$ game $\game$ and $i\in\{0,1\}$, does $\W_{\game}^i$ contain all variable valuations?\medskip

Our first result is a simple consequence of Remark~\ref{rem_transl} and Proposition~\ref{prop_det}.\ref{prop_det_ltl}.

\begin{theorem}
\label{thm_memb}
The membership problem for $\pltl$ games is decidable.
\end{theorem}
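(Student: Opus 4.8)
The plan is to reduce membership for a $\pltl$ game to the problem of determining the winner of an $\ltl$ game, which is decidable by Proposition~\ref{prop_det}.\ref{prop_det_ltl}. Given a $\pltl$ game $\game=(\arena,\phi)$, a player $i\in\{0,1\}$, and a concrete valuation $\alpha$ (whose domain we may restrict to $\var(\phi)$), the key observation is that membership in $\W_{\game}^i$ asks a question about a \emph{fixed} valuation. Since the valuation is fixed, Remark~\ref{rem_transl} applies directly.

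\begin{proof}
Let $\game=(\arena,\phi)$ be a $\pltl$ game, let $i\in\{0,1\}$, and let $\alpha$ be a variable valuation with domain $\var(\phi)$. By Remark~\ref{rem_transl}, there is an $\ltl$ formula $\phi_{\alpha}$ such that for every $w\in\left(2^{P}\right)^{\omega}$ and every position, in particular for position~$0$, we have $(w,0,\alpha)\models\phi$ if and only if $(w,0)\models\phi_{\alpha}$. Consider the $\ltl$ game $\game'=(\arena,\phi_{\alpha})$ on the same arena. For any play $\rho$ with trace $t(\rho)$, Player~$0$ wins $\rho$ in $\game$ with respect to $\alpha$ iff $(t(\rho),0,\alpha)\models\phi$ iff $(t(\rho),0)\models\phi_{\alpha}$ iff Player~$0$ wins $\rho$ in $\game'$. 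Hence the two games have the same winning plays for each player, and therefore the same winning strategies; in particular, Player~$i$ wins $\game$ with respect to $\alpha$ if and only if Player~$i$ wins $\game'$. By Proposition~\ref{prop_det}.\ref{prop_det_ltl}, the winner of the $\ltl$ game $\game'$ can be determined, so we can decide whether $\alpha\in\W_{\game}^i$.
\end{proof}

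The only subtlety worth flagging is that the $\ltl$ game $\game'$ must be built over the \emph{same} arena $\arena$, so that traces of plays coincide and the equivalence of winning plays transfers verbatim; since $\phi_{\alpha}$ depends only on $\phi$ and $\alpha$ and not on the arena, this causes no difficulty. No obstacle arises here because the valuation is part of the input and therefore fixed. The constructive content of Remark~\ref{rem_transl} also yields an effective algorithm, which is all that the statement requires.
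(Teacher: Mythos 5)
Your proof is correct and is exactly the paper's argument: the paper derives Theorem~\ref{thm_memb} as "a simple consequence of Remark~\ref{rem_transl} and Proposition~\ref{prop_det}.\ref{prop_det_ltl}", i.e., replace $\phi$ by the equivalent $\ltl$ formula $\phi_{\alpha}$ for the fixed input valuation $\alpha$ and solve the resulting $\ltl$ game on the same arena, which is precisely what you do.
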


The realizability problem for $\prompt$ is known to be $\twoexp$-complete. The proof of this result can easily be adapted
to graph-based $\prompt$ games as considered here.

\begin{theorem}[\cite{KPV09}]
\label{thm_prompt}
The emptiness problem for $\prompt$ games is $\twoexp$-complete.
\end{theorem}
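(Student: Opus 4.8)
The plan is to establish the two bounds separately. The $\twoexp$ lower bound is inherited from $\ltl$ games: every $\ltl$ formula is a $\prompt$ formula (one with $\var(\phi)=\emptyset$), and for such a formula satisfaction is independent of the valuation, so $\W_{\game}^i$ is either empty or contains every valuation. Hence $\W_{\game}^i=\emptyset$ holds if and only if Player~$i$ does not win the underlying $\ltl$ game, which by Proposition~\ref{prop_det}.\ref{prop_det_ltl} and \cite{PR89} is $\twoexp$-hard to decide.

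For the upper bound I would treat the two players separately; the interesting case is $i=0$, i.e.\ deciding whether there is \emph{some} valuation under which Player~$0$ wins. The tool is the alternating-color technique of \cite{KPV09}: fix a fresh proposition $c\notin P$ and relativize $\phi$ to a plain $\ltl$ formula $\mathrm{rel}(\phi)$ over $P\cup\{c\}$, in which each parametric eventuality $\F_{\le x}\psi$ is replaced by the unparametric $\ltl$ assertion ``$\psi$ holds before the second change of $c$ from now'', i.e.\ within the current or the next maximal block of constant color. I would then expand the arena $\arena$ so that Player~$0$ additionally emits the color bit at every step (inserting color-choosing vertices owned by Player~$0$ and recording their choice in the labeling), obtaining an arena $\arena'$ over $P\cup\{c\}$ of at most twice the size, and consider the $\ltl$ game $\game'=(\arena',\mathrm{rel}(\phi)\wedge\G\F\,\mathrm{chg})$, where $\mathrm{chg}$ marks a position at which the color differs from the previous one.

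The heart of the argument is the equivalence $\W_{\game}^0\ne\emptyset$ if and only if Player~$0$ wins $\game'$, which lets me decide emptiness by solving a single $\ltl$ game in doubly-exponential time via Proposition~\ref{prop_det}.\ref{prop_det_ltl}. For the forward direction, a valuation with $\alpha(x)=k$ together with a corresponding winning strategy in $\game$ is turned into a strategy in $\game'$ by coloring every play with blocks of length $k+1$ (a finite step-counter memory): each bound-$k$ eventuality is then satisfied inside the current-or-next block, so $\mathrm{rel}(\phi)$ holds, and the color changes infinitely often. For the backward direction I would invoke the color lemma of \cite{KPV09} in its quantitative form: if all color blocks along a play have length at most $N$ and $\mathrm{rel}(\phi)$ holds, then $\phi$ holds with bound $2N$.

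I expect the backward direction to be the main obstacle, namely extracting a \emph{uniform} bound $N$ on block lengths from a winning strategy in $\game'$. Here I would use that $\ltl$ games are won with finite-state strategies (Proposition~\ref{prop_det}.\ref{prop_det_ltl}): taking a finite-state winning strategy $\sigma$ for Player~$0$ and forming the product of $\arena'$ with its memory yields a finite graph on which $\sigma$ is positional. If some play consistent with $\sigma$ had a constant-color stretch longer than the number of product vertices, that stretch would repeat a vertex, and Player~$1$ could loop the resulting cycle forever---at Player~$0$ vertices $\sigma$ stays on the cycle with unchanged color, at Player~$1$ vertices Player~$1$ chooses to stay---producing a consistent play whose color eventually stabilizes and thereby violating $\G\F\,\mathrm{chg}$, contradicting that $\sigma$ is winning. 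Hence all blocks are bounded by the (doubly-exponential, but finite) size of the product, the color lemma supplies a finite bound on the eventualities, and $\W_{\game}^0$ is non-empty. Finally, for $i=1$ I would exploit monotonicity: since $\F_{\le x}$ occurs only positively in the negation normal form, satisfaction is monotone in $\alpha(x)$, so $\W_{\game}^1$ is downward closed and non-empty if and only if $0\in\W_{\game}^1$; as $\F_{\le 0}\psi\equiv\psi$, this is an $\ltl$ game and hence again solvable in doubly-exponential time, which together with the lower bound yields $\twoexp$-completeness of emptiness.
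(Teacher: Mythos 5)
Your proposal is correct and takes essentially the same route as the paper's proof (adapted from \cite{KPV09} and sketched around Lemma~\ref{lem_blink}): the alternating-color relativization $\phi_\emptyset\wedge alt_p$, an expanded arena in which Player~$0$ chooses the coloring via inserted choice vertices, reduction to a single $\ltl$ game, and the pumping argument on a finite-state winning strategy to extract a uniform bound on color-block lengths, which is exactly how the paper obtains $k$-boundedness and Corollary~\ref{cor_pltlsize}. The one detail you gloss over is that the inserted choice vertices double the length of plays, so the winning condition cannot be evaluated on the raw trace of $\arena'$; the paper resolves this with \emph{blinking semantics} (only every other vertex counts), a routine adjustment that does not affect your argument, and your direct treatment of $i=1$ via downward closure and $\F_{\le 0}\psi\equiv\psi$ is a sound substitute for the paper's appeal to duality.
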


The adapted proof in terms of graph-based games can be found in~\cite{Z10} and is sketched in the next section (see Lemma~\ref{lem_blink}). It
proceeds by a reduction to solving $\ltl$ games: given a $\prompt$ game
$\game=(\arena,\phi)$ one constructs an $\ltl$ game $\game'=(\arena',\phi')$ with
$|\arena'|\in\mathcal{O}(|\arena|^2)$ and
$|\phi'|\in\mathcal{O}(|\phi|)$ such that $\W_{\game}^0\not=\emptyset$ if and only if Player~$0$ wins $\game'$.
This proof yields the following corollary, which will be crucial when we determine optimal strategies in the next section: let $f(n)=2^{2^{75(n+1)}}\in 2^{2^{\bigo(n)}}$.

\begin{corollary}[\cite{Z10}]
\label{cor_pltlsize}
Let $\game=(\arena,\phi)$ be a $\prompt$ game with $\var(\phi)=\{x\}$. If $\W_\game^0\not=\emptyset$, then Player~$0$ also has a finite-state
winning strategy for $\game$ of size $2|\arena| f(|\phi|)$ which is winning with respect to the valuation $x\mapsto 2(|\arena|\cdot f(|\phi|)+1)$.
\end{corollary}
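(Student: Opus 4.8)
The plan is to read off both bounds from the reduction underlying Theorem~\ref{thm_prompt} (Lemma~\ref{lem_blink}) by carefully tracking sizes through the chain of reductions down to a parity game. Given the $\prompt$ game $\game=(\arena,\phi)$ with $\var(\phi)=\{x\}$, the alternating-color construction yields an $\ltl$ game $\game'=(\arena',\phi')$ with $\arena'=\arena\times\mem$, where $\mem$ is a memory structure of size $\bigo(|\arena|)$ realizing the fresh ``blink'' signal together with the associated bookkeeping (so $|\arena'|\in\bigo(|\arena|^2)$), and $\phi'$ is an $\ltl$ formula of size $\bigo(|\phi|)$, with $\W_\game^0\neq\emptyset$ if and only if Player~$0$ wins $\game'$. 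Here $\phi'$ asserts that the blink occurs infinitely often and that, under this assumption, each parameterized eventuality $\F_{\le x}\psi$ of $\phi$ is discharged before the color changes twice, i.e.\ within the current and the next block. I would then translate $\phi'$ into a deterministic parity automaton $\aut_{\phi'}$ of size at most $f(|\phi|)$; the constant $75$ and the ``$+1$'' in $f(n)=2^{2^{75(n+1)}}$ are chosen precisely to absorb the linear blow-up from $\phi$ to $\phi'$ in the standard doubly-exponential $\ltl$-to-parity construction. Composing gives a parity game $\game''$ on $\arena'\times\aut_{\phi'}$ with $\W_\game^0\neq\emptyset$ if and only if Player~$0$ wins $\game''$.

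Since $\W_\game^0\neq\emptyset$, Player~$0$ wins $\game''$, so by Proposition~\ref{prop_det}.\ref{prop_det_parity} she has a \emph{positional} winning strategy $\tau$ on $\arena'\times\aut_{\phi'}$. Viewing the whole chain as a single reduction $\game\le_{\mem'}\game''$ with composed memory $\mem'=\mem\times\aut_{\phi'}$, Remark~\ref{rem_reductions} turns $\tau$ into a finite-state winning strategy $\sigma$ for $\game$ whose memory is $\mem'$. Its size is $|\mem|\cdot|\aut_{\phi'}|\le 2|\arena|\,f(|\phi|)$, which is the first claimed bound, the factor $2$ coming from the color bit inside $\mem$.

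The remaining and most delicate step is to certify that $\sigma$ already wins with respect to the stated valuation, i.e.\ that along every $\tau$-consistent play no parameterized eventuality needs more than $2(|\arena|\cdot f(|\phi|)+1)$ steps. The key observation is that within a single color-block the component of $\mem$ recording the blink bookkeeping is frozen, so a blink-free segment of a $\tau$-play stays inside a slice of $\arena'\times\aut_{\phi'}$ of size at most $|\arena|\cdot f(|\phi|)$. If such a segment had length exceeding $|\arena|\cdot f(|\phi|)$, some state would repeat; as $\tau$ is positional, Player~$1$ could then pump the resulting blink-free cycle forever, producing a $\tau$-consistent play with only finitely many blinks and thereby violating $\phi'$, contradicting that $\tau$ is winning. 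Hence every block has length at most $|\arena|\cdot f(|\phi|)+1$, so each $\F_{\le x}\psi$, being discharged within the current and the next block, is satisfied within $2(|\arena|\cdot f(|\phi|)+1)$ steps, and $\sigma$ is winning for $\game$ with respect to $x\mapsto 2(|\arena|\cdot f(|\phi|)+1)$. I expect this block-length/cycle-pumping argument to be the crux, and in particular the observation that the valuation stays \emph{linear} in $|\arena|$—because the blink bookkeeping is constant across a block, even though the full parity arena is quadratic in $|\arena|$; translating the ``current-and-next-block'' semantics of $\phi'$ into the exact factor $2$ and the ``$+1$'' is the accompanying bookkeeping.
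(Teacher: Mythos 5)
Your proposal is correct and takes essentially the same approach as the paper's proof (from \cite{Z10}, sketched via Lemma~\ref{lem_blink} and the alternating-color technique): reduce to an $\ltl$ game over the color-choice arena, determinize into a parity automaton of size $f(|\phi|)$ to obtain the memory bound $2|\arena|f(|\phi|)$, and use the finite-state/pumping argument that consistent plays must be $k$-bounded because $alt_p$ is part of the winning condition. Your derivation of the valuation $x\mapsto 2(|\arena|\cdot f(|\phi|)+1)$ via the $k$-bounded case of Lemma~\ref{lem_altcolor} is exactly the intended argument.
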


To solve the other decision problems for games with winning conditions in full $\pltl$,
we make use of the duality of unipolar games and the
duality of the emptiness and universality problem. For an arena
$\arena=(V,V_0,V_1,E,v_0,\ell)$, let $\dual{\arena}\coloneq (V,V_1,V_0,E,v_0,\ell)$ be its {dual
arena}, where the two players swap their positions. Given a $\pltl$ game
$\game=(\arena,\phi)$, the {dual game} is $\dual{\game}\coloneq (\dual{\arena},\neg\phi)$.
The dual game of a $\pltlg$ game is a $\pltlf$ game and vice versa. It is easy to see that
Player~$i$ wins $\game$ with respect to $\alpha$ if and only if Player~$1-i$ wins
$\dual{\game}$ with respect to $\alpha$.
The sets $\W_{\game}^i$ enjoy two types of dualities, which we rely on in the
following. The first one is due to determinacy of $\ltl$ games, the second one
due to duality.
\begin{lemma}
\label{lem_dual}
Let $\game$ be a $\pltl$ game.
\begin{enumerate}
\item\label{lem_dual_point1} $\W_{\game}^0$ is the complement of $\W_{\game}^1$.
\item\label{lem_dual_point2} $\W_{\game}^i=\W_{\dual{\game}}^{1-i}$.
\end{enumerate}
\end{lemma}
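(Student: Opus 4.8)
The plan is to prove the two parts separately, since they rest on genuinely different facts about these games.

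For part~\ref{lem_dual_point1}, I would argue that for any fixed valuation $\alpha$ (restricted to $\var(\phi)$), exactly one of the two players wins $\game$ with respect to $\alpha$. By Remark~\ref{rem_transl}, fixing $\alpha$ turns the $\pltl$ winning condition $\phi$ into an equivalent $\ltl$ formula $\phi_\alpha$, so $\game$ with respect to $\alpha$ is, play-for-play, an $\ltl$ game. By Proposition~\ref{prop_det}.\ref{prop_det_ltl}, $\ltl$ games are determined, so \emph{some} player wins; and as already noted in the text, a single play cannot be won by both players, so \emph{at most one} player wins. Hence for every $\alpha$ we have $\alpha\in\W_{\game}^0$ if and only if $\alpha\notin\W_{\game}^1$, which is precisely the claim that $\W_{\game}^0$ is the complement of $\W_{\game}^1$. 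The only subtlety is making sure membership in $\W^0_\game$ and $\W^1_\game$ refers to winning strategies for the \emph{same} fixed $\alpha$; this is immediate from the definition of $\W^i_\game$ as the set of valuations for which Player~$i$ wins.

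For part~\ref{lem_dual_point2}, I would invoke the observation stated just above the lemma: Player~$i$ wins $\game$ with respect to $\alpha$ if and only if Player~$1-i$ wins $\dual{\game}$ with respect to $\alpha$. This is essentially a definitional unfolding: passing from $\game=(\arena,\phi)$ to $\dual{\game}=(\dual{\arena},\neg\phi)$ swaps the two players' vertex ownership and negates the winning condition, so a winning strategy for Player~$i$ in $\game$ corresponds exactly to a winning strategy for Player~$1-i$ in $\dual{\game}$, for each fixed $\alpha$ (here one uses that $\neg\phi$ is semantically the negation of $\phi$, so $(t(\rho),0,\alpha)\models\phi$ iff $(t(\rho),0,\alpha)\not\models\neg\phi$, together with part~\ref{lem_dual_point1} applied to $\dual\game$). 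Since this equivalence holds for every $\alpha$, the defining sets coincide: $\W_{\game}^i=\W_{\dual{\game}}^{1-i}$.

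I do not expect either part to present a real obstacle, as both reduce to facts already established in the excerpt; the work is purely in tracking definitions carefully. If there is any delicate point, it is in part~\ref{lem_dual_point1}, where one must be careful that determinacy is being applied to the $\ltl$ game obtained from a \emph{fixed} $\alpha$ (via Remark~\ref{rem_transl}) rather than to the parametric game as a whole, since the latter has no single notion of winner independent of $\alpha$.
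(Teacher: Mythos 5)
Your proof is correct and follows exactly the paper's (implicit) justification: part~\ref{lem_dual_point1} via determinacy of $\ltl$ games obtained by fixing $\alpha$ (Remark~\ref{rem_transl} and Proposition~\ref{prop_det}.\ref{prop_det_ltl}), and part~\ref{lem_dual_point2} via the duality observation that swapping vertex ownership and negating $\phi$ exchanges the players' winning strategies for each fixed $\alpha$. The only cosmetic remark is that your parenthetical appeal to part~\ref{lem_dual_point1} in the proof of part~\ref{lem_dual_point2} is unnecessary, since the strategy correspondence already gives the equality directly.
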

%


Another useful property is the monotonicity of the
parameterized operators: let $\alpha(x)\le \beta(x)$ and $\alpha(y)\ge
\beta(y)$. Then, $(w,i,\alpha)\models\F_{\le x}\phi$ implies
$(w,i,\beta)\models\F_{\le x}\phi$ and $(w,i,\alpha)\models\G_{\le y}\phi$
implies $(w,i,\beta)\models\G_{\le y}\phi$. Hence, the set $\W_{\game}^0$ is
upwards-closed if $\game$ is a $\pltlf$ game, and downwards-closed if $\game$ is
a $\pltlg$ game (valuations are compared componentwise).
Now, we prove the main result of this section.
\begin{theorem}
\label{thm_pltl}
The emptiness, finiteness, and universality problems for $\pltl$ games
are $\twoexp$-complete.
\end{theorem}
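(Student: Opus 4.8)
The plan is to place all three problems in $\twoexp$ by reducing them, for both players, to the already-solved emptiness problem for $\prompt$ games (Theorem~\ref{thm_prompt}), and to derive $\twoexp$-hardness from $\ltl$ games. First I would remove the bookkeeping over $i$ and over the problem type using Lemma~\ref{lem_dual}: since $\W_\game^1=\W_{\dual{\game}}^0$ and $\dual{\game}$ is again a $\pltl$ game, every instance for Player~$1$ is an instance for Player~$0$ of the dual game, so I may assume $i=0$. Moreover, by Lemma~\ref{lem_dual}.\ref{lem_dual_point1} and \ref{lem_dual_point2}, universality of $\W_\game^0$ is equivalent to emptiness of $\W_{\dual{\game}}^0$, so universality reduces to emptiness.

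For emptiness of $\W_\game^0$ I would use the mixed monotonicity noted before the theorem: $\W_\game^0$ is upward-closed in the $\Xvar$-coordinates and downward-closed in the $\Yvar$-coordinates. Hence $\W_\game^0\neq\emptyset$ iff Player~$0$ wins at some valuation mapping every $\Yvar$-variable to $0$; substituting $y\mapsto 0$ everywhere turns $\game$ into a $\pltlf$ game $\game_0$ with $\W_{\game_0}^0\neq\emptyset$ iff $\W_\game^0\neq\emptyset$. Upward-closure of $\W_{\game_0}^0$ then lets me identify all $\Xvar$-variables with a single $x$ without changing the emptiness status, yielding a $\prompt$ game to which Theorem~\ref{thm_prompt} applies. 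This puts emptiness, and therefore universality, in $\twoexp$.

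For finiteness I would split on $\var(\phi)$. If $\var(\phi)$ contains an $\Xvar$-variable, then increasing that coordinate turns any element of $\W_\game^0$ into an infinite ray (upward-closure), so $\W_\game^0$ is finite iff empty and the previous paragraph applies; if $\phi$ is an $\ltl$ formula, $\W_\game^0$ is a single point and finiteness is trivial. The remaining case is a pure $\pltlg$ game, where $\W_\game^0$ is downward-closed; such a set is infinite iff it contains a full coordinate axis $\{t\cdot e_i\}_{t\in\nats}$, i.e.\ iff for some variable $y_i$ Player~$0$ wins at every valuation $(y_i\mapsto t,\ \text{rest}\mapsto 0)$. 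I would characterize this by the plain $\ltl$ game $\game^{(i)}=(\arena,\phi^{(i)})$, where $\phi^{(i)}$ replaces $\G_{\le y_i}\psi_i$ by the unbounded $\G\psi_i$ and every other $\G_{\le y_j}$ by its $y_j=0$ instance: axis $i$ lies in $\W_\game^0$ iff Player~$0$ wins $\game^{(i)}$. Checking this for each of the $\le|\phi|$ variables is finitely many $\ltl$-game solutions (Proposition~\ref{prop_det}.\ref{prop_det_ltl}), keeping finiteness in $\twoexp$.

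The hard part will be the nontrivial direction of this last equivalence: from the assumption that Player~$0$ wins at $(y_i\mapsto t)$ for all $t$ — a family of $\ltl$ games whose formulae grow with $t$ — I must produce one strategy for the unbounded $\G\psi_i$. I would argue the contrapositive via determinacy and finite-state strategies (Proposition~\ref{prop_det}): if Player~$0$ loses $\game^{(i)}$, then Player~$1$ has a finite-state strategy ensuring $\neg\phi^{(i)}$, and a pumping argument on the finite product arena bounds the eventuality $\F\neg\psi_i$ (the dual of $\G\psi_i$) by the product's size, giving a uniform $t$ at which Player~$1$ wins; this prompt-bounding of an eventuality on a finite arena is exactly the mechanism underlying the $\prompt$ reduction (Lemma~\ref{lem_blink}) and Corollary~\ref{cor_pltlsize}. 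Finally, $\twoexp$-hardness of all three problems follows by embedding $\ltl$ games: for an $\ltl$ condition $\psi$ the set $\W_\game^0$ is empty exactly when Player~$0$ loses, settling emptiness and (via Lemma~\ref{lem_dual}.\ref{lem_dual_point1}) universality, while for finiteness I take the $\pltlg$ condition $\psi\wedge\G_{\le y}\true$, which is equivalent to $\psi$ but ranges over the infinite valuation domain, so that $\W_\game^0$ is finite exactly when Player~$0$ loses $\psi$.
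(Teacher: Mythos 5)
Your treatment of emptiness and universality coincides with the paper's: set the $\Yvar$-variables to zero by downward closure, collapse the $\Xvar$-variables to a single $x$ by upward closure, invoke Theorem~\ref{thm_prompt}, and dualize universality to emptiness via Lemma~\ref{lem_dual}; your hardness argument (including the $\psi\wedge\G_{\le y}\true$ trick for finiteness) is also fine. The genuine gap is in the finiteness case for $\pltlg$ games: your key equivalence ``axis $i$ lies in $\W_{\game}^0$ iff Player~$0$ wins the relaxed $\ltl$ game $\game^{(i)}$'' is false; only the direction you call easy (winning $\game^{(i)}$ implies the axis is contained in $\W_\game^0$, by monotonicity in negation normal form) holds. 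Counterexample: let $\phi=\G_{\le y}\F q\wedge\F\G\neg q$ and let $\arena$ consist of two fully connected vertices labeled $\{q\}$ and $\emptyset$, both owned by Player~$0$. For every $t$, Player~$0$ wins with respect to $y\mapsto t$ by producing the trace $\emptyset^{t}\{q\}\emptyset^{\omega}$: the single $q$ at position $t$ witnesses $\F q$ at every position $j\le t$, so $\G_{\le t}\F q$ holds, and $\F\G\neg q$ holds as well. Hence the whole axis lies in $\W_\game^0$ and $\W_\game^0$ is infinite. But the relaxed formula $\G\F q\wedge\F\G\neg q$ is unsatisfiable, so Player~$0$ loses $\game^{(i)}$, and your algorithm would wrongly declare $\W_\game^0$ finite.

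The same example shows why your proposed repair---determinacy plus pumping on the finite product arena---cannot succeed: Player~$1$ wins $\game^{(i)}$ trivially (here $\neg\phi^{(i)}=\F\G\neg q\vee\G\F q$ is a tautology), yet there is no uniform $t$ at which she wins the parameterized game, since she loses it for every $t$. Your pumping step presupposes that every play consistent with Player~$1$'s strategy satisfies the eventuality $\F\neg\psi_i$ within a window bounded by the product size, but a consistent play may satisfy $\neg\phi^{(i)}$ through an entirely different part of the formula (here the disjunct $\G\F q$), in which case $\F\neg\psi_i$ need never be fulfilled at all. This compositional failure of the naive relaxation $\G_{\le y}\mapsto\G$ (dually $\F_{\le x}\mapsto\F$) is precisely the phenomenon that forces the alternating-color technique of~\cite{KPV09}. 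Accordingly, the paper never relaxes the distinguished variable: it reduces finiteness to \emph{universality} of the $\pltlg$ game $\game_y=(\arena,\phi_y)$, where only the operators $\G_{\le z}$ with $z\neq y$ are replaced by their $z\mapsto 0$ instances and $\G_{\le y}$ is kept intact, and then solves that universality problem through Lemma~\ref{lem_dual} and the $\prompt$ emptiness result (Theorem~\ref{thm_prompt}), whose correctness rests on the coloring construction rather than on relaxation. Your finiteness case needs to be replaced by this (or an equivalent) reduction.
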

\begin{proof}
Let $\game=(\arena,\phi)$. Due to Lemma~\ref{lem_dual}.\ref{lem_dual_point2}
it suffices to consider $i=0$.

\textbf{Emptiness of $\W_{\game}^0$:} Let $\phi_{\F}$ be the formula obtained
from $\phi$ by inductively replacing every subformula $\G_{\le y}\psi$ by
$\psi$, and let $\game_{\F}\coloneq(\arena,\phi_{\F})$. Note that $\game_{\F}$ is a $\pltlf$ game.
Applying downwards-closure, we obtain that $\W_{\game}^0$ is empty if and only if $\W_{\game_{\F}}^0$ is empty.

The latter problem can be decided by a reduction to $\prompt$ games. Fix a variable $x\in\Xvar$ and
let $\phi'$ be the formula obtained from $\phi_{\F}$ by replacing every variable $z$ in $\psi$ by $x$. Then,
$\W_{\game_{\F}}^0\not=\emptyset$ if and only if $\W_{\game'}^0\not=\emptyset$, where $\game'=(\arena, \phi')$.
The latter problem can be decided in doubly-exponential
time by Theorem~\ref{thm_prompt}. Since we have $|\phi'|\le |\phi|$, the emptiness of $\W_{\game}^0$ can be decided in doubly-exponential time.

\textbf{Universality of $\W_{\game}^0$:} Applying both statements of Lemma~\ref{lem_dual} we get that $\W_{\game}^0$ is universal if and only if $\W_{\game}^1=\emptyset$ if and only if $\W_{\dual{\game}}^0=\emptyset$.
The latter is decidable in doubly-exponential time, as shown above.
%

%

\textbf{Finiteness of $\W_{\game}^0$:} If $\phi$ contains at least one $\F_{\le
x}$, then $\W_{\game}^0$ is infinite, if and only if it is non-empty, due to monotonicity
of $\F_{\le x}$. The emptiness of $\W_{\game}^0$ can be decided in
doubly-exponential time as discussed above.
Otherwise, $\game$ is a $\pltlg$ game whose finiteness problem
can be decided in doubly-exponential time by a reduction
to the universality problem for a (simpler) $\pltlg$ game. We assume that $\phi$
has at least one parameterized temporal operator, since the
problem is trivial otherwise. The set $\W_{\game}^0$ is infinite if and only if there is a
variable $y\in\var(\phi)$ that is mapped to infinitely many values by
the valuations in $\W_{\game}^0$. By downwards-closure we can assume that all
other variables are mapped to zero. Furthermore, $y$ is mapped to infinitely
many values if and only if it is mapped to all possible values, again by
downwards-closure. To combine this, we define $\phi_y$ to be the formula
obtained from $\phi$ by inductively replacing every subformula $\G_{\le z}\psi$
for $z\not=y$ by $\psi$ and define $\game_y\coloneq(\arena,\phi_y)$. Then,
$\W_{\game}^0$ is infinite, if and only if there exists some variable $y\in\var(\phi)$ such
that $\W_{\game_y}^0$ is universal. So, deciding whether $\W_{\game}^0$ is
infinite can be done in doubly-exponential time by solving $|\var(\phi)|$ many
universality problems for $\pltlg$ games, which were discussed above.

Finally, hardness follows directly from $\twoexp$-hardness of solving $\ltl$ games.
\end{proof}

\section{Optimal Winning Strategies for unipolar PLTL Games}
\label{sec_opt}
For unipolar games, it is natural to view synthesis of winning strategies as an
optimization problem: which is the {\it best} variable valuation $\alpha$ such
that Player~$0$ can win with respect to $\alpha$? We consider two quality
measures for a valuation $\alpha$ for $\phi$: the maximal parameter $\max_{ z\in
\var(\phi)} \alpha(z)$ and the minimal parameter $\min_{ z \in \var(\phi)}
\alpha(z)$. For a $\pltlf$ game, Player~$0$ tries to minimize the waiting times.
Hence, we are interested in minimizing the minimal or maximal parameter. Dually,
for $\pltlg$ games, we are interested in maximizing the quality measures. The dual
problems, i.e., maximizing the waiting times in a $\pltlf$ game and minimizing
the satisfaction time in a $\pltlg$ game, are trivial due to upwards- respectively downwards-
closure of the set of winning valuations. Again, we only consider Player~$0$ as one can dualize the game to obtain
similar results for Player~$1$. The main result of this section states that all
these op\-ti\-mi\-za\-tion problems are not harder than solving $\ltl$ games.

\begin{theorem}
\label{thm_pltlopt}
Let $\game_{\F}=(\arena_{\F},\phi_{\F})$ be a $\pltlf$ game and
$\game_{\G}=(\arena_{\G},\phi_{\G})$ be a $\pltlg$ game. Then, the following
values (and winning strategies realizing them) can be computed in
doubly-exponential time.
\begin{enumerate}
\item\label{thm_pltlopt1} $\min_{\alpha\in\W_{\game_{\F}}^0}\min_{x\in\var(\phi_{\F})}\alpha(x)$.
\item\label{thm_pltlopt2} $\min_{\alpha\in\W_{\game_{\F}}^0}\max_{x\in\var(\phi_{\F})}\alpha(x)$.
\item\label{thm_pltlopt3} $\max_{\alpha\in\W_{\game_{\G}}^0}\max_{y\in\var(\phi_{\G})}\alpha(y)$.
\item\label{thm_pltlopt4} $\max_{\alpha\in\W_{\game_{\G}}^0}\min_{y\in\var(\phi_{\G})}\alpha(y)$.
\end{enumerate}
\end{theorem}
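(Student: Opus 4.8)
The plan is to reduce all four quantities, in each case together with a realizing strategy, to a single core problem: given a $\prompt$ game $\game=(\arena,\phi)$ with $\var(\phi)=\{x\}$, compute the least value $v^{\ast}(\game)$ for which $x\mapsto v^{\ast}(\game)$ lies in $\W_{\game}^0$ (and report $\W_{\game}^0=\emptyset$ otherwise), along with a finite-state strategy winning for this valuation. Once $v^{\ast}$ can be computed, the four cases follow by elementary manipulations exploiting the closure properties of the winning sets. For \ref{thm_pltlopt2} I would identify all variables of $\phi_{\F}$ with a single $x$: since the winning set is upwards-closed, the minimal $\max_{x}\alpha(x)$ is attained on a uniform valuation $\alpha\equiv v$, so the answer equals $v^{\ast}$ of the resulting $\prompt$ game. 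For \ref{thm_pltlopt1} I would, for each variable $x_j$ separately, send every other variable to infinity by replacing its $\F_{\le x_i}$ by $\F$ (again justified by upwards-closure); this leaves a $\prompt$ game in $x_j$, and the answer is the minimum of the corresponding values $v^{\ast}$ over all $j$.

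For the $\pltlg$ cases I would invoke duality. By Lemma~\ref{lem_dual}, $\W_{\game_{\G}}^0$ is the complement of $\W_{\dual{\game_{\G}}}^0$, and $\dual{\game_{\G}}$ is a $\pltlf$ game, whose winning set is upwards-closed while $\W_{\game_{\G}}^0$ is downwards-closed. Maximizing a coordinate over the downwards-closed set thus amounts to locating the threshold at which the upwards-closed dual set turns winning. Concretely, for \ref{thm_pltlopt4} the optimum is attained on the uniform diagonal (by downwards-closure), so identifying all variables of $\dual{\game_{\G}}$ yields a $\prompt$ game whose value $v^{\ast}$ gives the answer as $v^{\ast}-1$; for \ref{thm_pltlopt3} I would, for each variable $y_j$, freeze every other variable of $\dual{\game_{\G}}$ to $0$ (replacing the corresponding $\F_{\le y_i}\psi$ by $\psi$), leaving a $\prompt$ game in $y_j$, and take the maximum over $j$ of $v^{\ast}-1$ (with the evident conventions when the winning set is empty or universal). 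In all these reductions the arena is unchanged and the formula does not grow, so it remains to solve the core problem in doubly-exponential time.

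The crux is computing $v^{\ast}(\game)$. The naive route — binary search over $v$, deciding each membership query $x\mapsto v\in\W_{\game}^0$ by translating the valuation into the $\ltl$ formula of Remark~\ref{rem_transl} and solving the resulting $\ltl$ game — fails on complexity: Corollary~\ref{cor_pltlsize} only bounds the optimum by $B\coloneq 2(|\arena|\,f(|\phi|)+1)$, which is doubly-exponential, so the expanded formula has doubly-exponential size and solving its $\ltl$ game would cost at least triply-exponential time. The remedy is to keep the bound out of the formula and encode it as an explicit counter inside a game. I would use the blink reduction underlying Theorem~\ref{thm_prompt} (see Lemma~\ref{lem_blink}) to obtain, from $\game$, a parity game of doubly-exponential size in which Player~$0$ wins exactly when she can discharge every pending eventuality within a bounded block, and then augment its state space by a counter that records the current waiting time, capped at $B$. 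Since the product of a doubly-exponential arena with the (doubly-exponential) cap $B$ is again only doubly-exponential, this counter game stays doubly-exponential, and $v^{\ast}$ can be read off from it: either by a one-shot quantitative evaluation that minimizes the maximal counter value reached, or by binary search over $v\le B$ on the fixed counter-augmented arena, which uses only $2^{\bigo(|\phi|)}$ parity-game solves, each doubly-exponential and hence doubly-exponential in total.

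Finally, Proposition~\ref{prop_det}.\ref{prop_det_parity} supplies a positional winning strategy in the counter game at the optimal threshold, and Remark~\ref{rem_reductions} lifts it back through the blink reduction and the variable identifications to a finite-state strategy in the original game that realizes the computed value; the $\pltlg$ strategies are obtained by dualizing. I expect the main obstacle to be keeping every intermediate game doubly-exponential, which hinges precisely on never expanding the doubly-exponential bound $B$ into the formula, and on matching the counter value in the reduced game to the genuine parameter value in the $\pltlf$ semantics: the factor $2$ appearing in $B$ is exactly the overhead of the alternating-color blocks, and making this correspondence exact — so that the optimum computed in the counter game is the true optimal valuation — is the delicate part of the argument.
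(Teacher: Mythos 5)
Your reductions for items \ref{thm_pltlopt2}, \ref{thm_pltlopt3} and \ref{thm_pltlopt4} are sound and essentially the paper's (identifying all variables, respectively freezing the others to $0$, is justified by upwards-/downwards-closure, and dualizing before or after the substitution makes no difference; your sign $v^{\ast}-1$ is also the right one). The reduction for item \ref{thm_pltlopt1}, however, contains a genuine gap: replacing $\F_{\le x_i}\psi$ for $i\neq j$ by $\F\psi$ is \emph{not} justified by upwards-closure. The projection of $\W_{\game_{\F}}^0$ onto the $x_j$-coordinate only ranges over valuations that give the other variables \emph{finite} values, whereas the $\F$-relaxed game can be won even when no finite values exist. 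Concretely, take the two-variable condition $\F_{\le x_1} r \vee \F_{\le x_2} p \vee \G q$ on an arena controlled entirely by Player~$1$ whose plays have traces $\{q\}^n\{p\}^{\omega}$ ($n$ arbitrary) or $\{q\}^{\omega}$, with $r$ holding nowhere: every play satisfies $\F p\vee\G q$, so your relaxed game in $x_1$ is won with $x_1\mapsto 0$, yet $\W_{\game_{\F}}^0=\emptyset$ because Player~$1$ can exceed any bound on $x_2$; your algorithm outputs $0$ instead of reporting emptiness. This failure is precisely why the paper invokes the alternating-color technique: $\F_{\le x_i}\psi$ is replaced not by $\F\psi$ but by ``$\psi$ within one color change'', the expanded arena $\arena'$ lets Player~$0$ choose the colors, $alt_p$ forces infinitely many changes, and finite-state determinacy then yields $k$-bounded colorings, which Lemma~\ref{lem_altcolor} converts back into finite bounds $2k$ (Lemma~\ref{lem_blink}). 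The paper states this pitfall explicitly: one cannot just remove parameters from eventualities, since the waiting times must remain bounded by some fixed value.

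The core $\prompt$ optimization is also where your proposal stops short of a proof. You correctly identify that the doubly-exponential bound $B$ must not be hardwired into the formula, but the device you propose --- the blink-reduction parity game augmented with ``a counter that records the current waiting time, capped at $B$'' --- does not work as stated. First, which eventualities are pending at a given position is not visible in the arena (blinking or not); it depends on the state of an automaton for $\phi$, and distinct runs of a nondeterministic automaton carry distinct waiting times, so a single arena-level counter is not even well defined. Second, the blink reduction relaxes exactly the variable $x$ being optimized and is inherently exact only up to the factor-two block slack of Lemmata~\ref{lem_altcolor} and~\ref{lem_blink}; you concede that making the counter value match $\alpha(x)$ exactly is ``the delicate part'', but that is the entire problem. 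The paper resolves it by a different route: it never applies the blink reduction to the core problem, and instead builds an unambiguous generalized B\"uchi automaton for the relaxed formula together with per-eventuality visiting constraints (Lemma~\ref{lem_gnba}), passes to an unambiguous --- hence, after removing unproductive states, non-confluent --- B\"uchi automaton (Lemma~\ref{lem_nba}), and determinizes it while adding one counter $d(q,j)$ per simulated run and per constraint (Construction~\ref{cons_countpar}, Lemma~\ref{lem_parcorrect}). Non-confluence guarantees that at most $|Q|$ runs, each identified by its last state, must be tracked, so the resulting deterministic parity automaton has size $2^{2^{\bigo(|\phi|)}}\cdot(\alpha(x)+1)^{2^{\bigo(|\phi|)}}$ but only $2^{\bigo(|\phi|)}$ colors (Lemma~\ref{lem_autsize}); this color bound is what keeps each parity game $\arena\times\mem$ solvable in doubly-exponential time via Proposition~\ref{prop_det}.\ref{prop_det_parity}, a point your proposal never verifies. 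These missing ingredients --- exact per-run counters, unambiguity/non-confluence, and the count of priorities --- constitute the substance of the paper's proof.
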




We begin the proof by showing that all four problems can be reduced to the optimization problem for $\prompt$ games: let $\game=(\arena,\phi)$ be a $\prompt$ game with $\var(\phi)=\{x\}\subseteq \Xvar$. The goal is to determine $\min_{\alpha\in\W_{\game}^0} \alpha(x)$.  

The latter three reductions are simple applications of the monotonicity of the parameterized operators, while the first one requires substantial work. \smallskip

\ref{thm_pltlopt1}.) For each $x\in\var(\phi)$, we replace eventualities parameterized by $z\not=x$ by an unparameterized formula, thereby constructing the projection of $\W_{\game_\F}^0$ to the values of $x$. However, we cannot just remove the parameters from an eventuality, as we have to ensure that the waiting times are still bounded by some unknown, but fixed value. This is achieved by applying the alternating-color technique for $\prompt$~\cite{KPV09}.

Let $p\notin P$ be a fixed proposition. An $\omega$-word $w'=w_0'w_1'w_2'\ldots\in\left(2^{P\cup\{p\}}\right)^{\omega}$
is a $p$-coloring of $w=w_0w_1w_2\ldots\in\left(2^{P}\right)^{\omega}$ if
$w_n'\cap P=w_n$, i.e., $w_n$ and $w_n'$ coincide on all propositions in $P$.
The additional proposition $p$ can be thought of as the color of $w_n'$: we say
that a position $n$ is {green} if $p\in w_n'$, and say that it is {red}
if $p\notin w_n'$. Given $k\in\nats$ we say that $w'$ is {$k$-spaced},
if the colors in $w'$ change infinitely often, but not twice in any infix of
length $k$. Dually, $w'$ is {$k$-bounded}, if the colors change at least
once in every infix of length $k+1$.

The formula $alt_p\coloneq\G\F p\wedge \G\F\neg p$ is satisfied if the colors change
infinitely often. Given a $\pltl$ formula~$\phi$ and $X\subseteq var(\phi)$, let $\phi_X$ denote the formula obtained by inductively replacing every subformula $\F_{\le x}\psi$ with  $x\notin X$ by $(p\implies (p\U(\neg p\U\psi)))\wedge(\neg p\implies (\neg p\U( p\U\psi)))$.
Finally, consider the formula $\phi_X\wedge alt_p$. It forces a coloring to have infinitely many color changes and every subformula $\F_{\le x}\psi$ with $x\notin X$ to be satisfied within one color change. We have $\var(\phi_X) = X$ and $|\phi_X|\in \bigo(|\phi|)$.

For a variable valuation $\alpha$ and a subset $X$ of $\alpha$'s domain, we denote the restriction of $\alpha$ to $X$ by $\alpha_{\restriction X}$.

\begin{lemma}[\cite{KPV09}]
\label{lem_altcolor}
Let $\phi$ be a $\pltl$ formula, $X\subseteq\var(\phi)$, and let $w\in\left(2^P\right)^{\omega}$.
\begin{enumerate}
\item\label{lem_altcolor1} If $(w,0,\alpha)\models \phi$, then $(w',0,\alpha_{\restriction X})\models \phi_X\wedge alt_p$ for every $k$-spaced $p$-coloring $w'$ of $w$, where $k=\max_{x\in\var(\phi)\setminus X} \alpha(x)$.
\item\label{lem_altcolor2} Let $k\in\nats$. If $w'$ is a $k$-bounded $p$-coloring of $w$ with $(w',0,\alpha)\models \phi_X$, then $(w,0,\beta)\models\phi$ where $\beta(x)=\begin{cases}
\alpha(x) & \text{if $x\in X$,}\\
2k & \text{else.}                                                                                                                                                          \end{cases}
$
\end{enumerate}
\end{lemma}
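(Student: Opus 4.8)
The plan is to prove both statements by structural induction on $\phi$, strengthening each claim so that it holds at every position $i\in\nats$ rather than only at $i=0$, since the temporal operators shift the evaluation point. Throughout I will exploit that $w$ and $w'$ agree on every proposition in $P$, so that atomic propositions, the Boolean connectives, $\X$, $\U$, $\R$, and the operators $\F_{\le x}$ with $x\in X$ all transfer directly under the induction hypothesis, and that the fresh color proposition $p\notin P$ never occurs in $\phi$. The only genuinely new case in either direction is $\theta=\F_{\le x}\psi$ with $x\notin X$, where $\theta_X$ is the green/red formula $(p\implies (p\U(\neg p\U\psi_X)))\wedge(\neg p\implies (\neg p\U(p\U\psi_X)))$; everything else is bookkeeping closed by the induction hypothesis.

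For the first statement I first note that the conjunct $alt_p=\G\F p\wedge\G\F\neg p$ holds everywhere because a $k$-spaced coloring changes color infinitely often by definition. For the inductive claim, suppose $(w,i,\alpha)\models\F_{\le x}\psi$ with $x\notin X$, so $\psi$ holds at some $i+j$ with $0\le j\le\alpha(x)\le k$ (using $x\in\var(\phi)\setminus X$). The induction hypothesis gives $(w',i+j,\alpha_{\restriction X})\models\psi_X$, and it remains to show that $\theta_X$ holds at $i$. The idea is that, since $w'$ is $k$-spaced, its color blocks are long enough that the window from $i$ to $i+j$ meets at most one color change; hence $\psi_X$ is witnessed either inside the color block containing $i$ or inside the immediately following block, which, after the obvious case split on the color at $i$, is exactly what the nested until-formula asserts. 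The case $x\in X$ is immediate because $\alpha_{\restriction X}(x)=\alpha(x)$.

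For the second statement the direction is reversed and I use that a $k$-bounded coloring has color blocks of length at most $k$. Assuming $(w',i,\alpha)\models\theta_X$ for $\theta=\F_{\le x}\psi$ with $x\notin X$, I split on the color of $i$; say $i$ is green, so $p\U(\neg p\U\psi_X)$ holds at $i$. Unfolding the two until-operators produces a green segment of some length $j$ (on which $p$ holds throughout, hence lying inside a single block) followed by a red segment of length $j'$ (inside the next block), with $\psi_X$ holding at $i+j+j'$. Then $k$-boundedness yields $j\le k$ and $j'\le k$, so $\psi_X$ is witnessed within $j+j'\le 2k=\beta(x)$ steps; the induction hypothesis gives $(w,i+j+j',\beta)\models\psi$ and therefore $(w,i,\beta)\models\F_{\le x}\psi$. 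Again the case $x\in X$ is trivial.

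I expect the main obstacle to be the precise off-by-one bookkeeping in the $\F_{\le x}$ case: matching the block-length guarantees implicit in ``$k$-spaced'' (blocks long enough that a window of length at most $k$ crosses at most one change) and ``$k$-bounded'' (blocks of length at most $k$, so one color change costs at most $2k$ steps) against the exact step counts $j$ and $j'$ extracted from the nested until-formulas, including the borderline subcase in which $\psi_X$ is already witnessed inside the starting block so that $j'=0$. Once these counting details are pinned down, both inductions close routinely.
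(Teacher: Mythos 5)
The paper itself offers no proof of this lemma: it is imported from Kupferman et al.~\cite{KPV09} as part of the alternating-color technique, so your argument can only be compared against that standard proof, and it does follow the same route --- structural induction, strengthened to all positions $i$, with the single interesting case $\F_{\le x}\psi$, $x\notin X$, handled by a block analysis of the coloring. Part 2 of your argument is exactly right: the outer until gives a monochromatic green segment $[i,m)$, the inner until a monochromatic red segment $[m,m')$, $k$-boundedness caps each segment at length $k$, and the witness $m'$ of $\psi_X$ therefore lies within $2k=\beta(x)$ steps of $i$.

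The one point to pin down is your claim in part 1 that the window $[i,i+j]$ with $j\le k$ ``meets at most one color change.'' Whether this is literally true depends on how one reads the paper's definition of $k$-spaced (``colors do not change twice in any infix of length $k$''). Under the reading that matches \cite{KPV09} --- every color block has length at least $k$ --- your claim holds: two changes inside the window would enclose a full block of length at least $k$ strictly between positions $i$ and $i+j\le i+k$, which is impossible. Under the laxer reading (blocks of length at least $k-1$), the boundary case $j=k$ admits two changes, e.g.\ position $i$ green, positions $i+1,\dots,i+k-1$ red, position $i+k$ green, with the witness of $\psi$ sitting at $i+k$, i.e.\ in the block after next; your claim then fails, but the conclusion still holds because the witness position $m'$ of $\neg p\U\psi_X$ need not itself satisfy $\neg p$, so taking $m=i+1$ and $m'=i+k$ still satisfies $p\U(\neg p\U\psi_X)$. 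Since you explicitly flagged exactly this off-by-one bookkeeping as the remaining obstacle, I regard the proposal as correct in substance; just make sure the write-up either fixes the block-based reading of $k$-spaced or adds the one-line observation about until-witnesses that settles the boundary case.
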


The previous lemma shows how replace (on suitable $p$-colorings) a parameterized eventuality by an $\ltl$ formula, while still ensuring a bound on the satisfaction of the parameterized eventuality. To apply the alternating-color technique, we have to transform the original arena $\arena$ into an arena
$\arena'$ in which Player~$0$ produces $p$-colorings of the plays of the original
arena, i.e., $\arena'$ will consist of two disjoint copies of $\arena$, one labeled with
$p$, the other one not. Assume a play is in vertex $v$ in one component. Then, the
player  whose turn it is at $v$ chooses a successor $v'$ of $v$ and Player~$0$
picks a component. The play then continues in this component's vertex $v'$. We
split this into two sequential moves: first, the player whose turn it is chooses
a successor and then Player~$0$ chooses the component. Thus, we have to
introduce a new vertex for every edge of $\arena$ which allows Player~$0$ to choose
the component.
Formally, given an arena $\arena=(V,V_0,V_1,E,v_0,\ell)$, define the expanded arena $\arena'\coloneq (V',V_0',V_1',E',v_0',\ell')$ by
\begin{itemize}
\item $V'=V\times\{0,1\}\cup E$,
\item $V_0'=V_0\times\{0,1\}\cup E$,
\item $V_1'=V_1\times\{0,1\}$,
\item $E'=\{((v,0),e),((v,1),e),(e,(v',0)),(e,(v',1))\mid e=(v,v')\in E\}$,
\item $v_0'=(v_0,0)$,
\item $\ell'(e)=\emptyset$ for all $e\in E$ and
$\ell'(v,b)=\begin{cases}\ell(v)\cup\{p\} & \text{if $b=0$},\\
{\ell(v)}&\text{if $b=1$}.\\
         \end{cases}
$
\end{itemize}
$\arena'$ is bipartite with partition $\{V\times\{0,1\},E\}$, so a play
has the form $(\rho_0,b_0)e_0(\rho_1,b_1)e_1(\rho_2,b_2)\ldots$ where
$\rho_0\rho_1\rho_2\ldots$ is a play in $\arena$, $e_n=(\rho_n,\rho_{n+1})$, and the
$b_n$ are in $\{0,1\}$.
Also, we have $|\arena'|\in\bigo(|\arena|^2)$.

Finally, this construction necessitates a modification of the semantics
of the game: only every other vertex is significant when it
comes to determining the winner of a play in $\arena'$, the choice vertices
have to be ignored. This motivates {blinking semantics} for $\pltl$ games. Let
$\game=(\arena,\phi)$ be a $\pltl$ game and $\rho=\rho_0\rho_1\rho_2\ldots$ be a
play. Player~$0$ wins $\rho$ with respect to $\alpha$ under blinking semantics, if
$(t(\rho_0\rho_2\rho_4\ldots),0,\alpha)\models\phi$. Analogously, Player~$1$ wins
$\rho$ with respect to $\alpha$ under blinking semantics if
$(t(\rho_0\rho_2\rho_4\ldots),0,\alpha)\not\models\phi$. The notions of winning strategies and winning $\game$ with respect to $\alpha$ under blinking semantics are defined in the obvious way.

\begin{remark}
\label{rem_blink_fs}
$\pltl$ games with respect to a fixed variable valuation under blinking semantics are determined with finite-state strategies.
\end{remark}

Now, we can state the connection between a $\pltlf$ game $(\arena, \phi)$ and its counterpart in $\arena'$ with blinking semantics. The proof relies on the existence of finite-state winning strategies which necessarily produce only $k$-bounded plays for some fixed $k$, since $alt_p$ is part of the winning condition.

\begin{lemma}
\label{lem_blink}
Let $(\arena,\phi)$ be a $\pltlf$ game and $X\subseteq \var(\phi)$.
\begin{enumerate}
\item\label{lem_blink1} Let $\alpha\colon \var(\phi)\rightarrow\nats$ be a  variable valuation. If Player $i$ wins $(\arena, \phi)$ with respect to $\alpha$, then she wins $(\arena',\phi_X\wedge alt_p)$ with respect to $\alpha_{\restriction X}$ under blinking semantics.
\item\label{lem_blink2} Let $\alpha\colon X\rightarrow\nats$ be a  variable valuation. If Player $i$ wins $(\arena',\phi_X\wedge alt_p)$ with respect to $\alpha$ under blinking semantics, then there exists a variable valuation $\beta$ with $\beta(x)=\alpha(x)$ for every $x\in X$ such that she wins $(\arena, \phi)$ with respect to $\beta$.
\end{enumerate}
\end{lemma}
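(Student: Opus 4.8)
My plan is to prove the statement for Player~$0$, the player who must realise the bounds on the eventualities; this is exactly the instance the reduction to $\prompt$ optimisation needs, and the construction of $\arena'$ is tailored to her because she owns the colour vertices $E$. The first step I would record is the exact dictionary between the two games. A play $(\rho_0,b_0)e_0(\rho_1,b_1)e_1\cdots$ of $\arena'$ projects to the play $\rho_0\rho_1\rho_2\cdots$ of $\arena$, and its blinking trace $t\bigl((\rho_0,b_0)(\rho_1,b_1)\cdots\bigr)$ is precisely the $p$-colouring $w'$ of $w=t(\rho_0\rho_1\cdots)$ whose $n$-th position is green iff $b_n=0$. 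So playing $\arena'$ under blinking semantics is the same as playing $\rho$ in $\arena$ while Player~$0$ simultaneously paints a $p$-colouring of $t(\rho)$, and she wins iff the painted word satisfies $\phi_X\wedge alt_p$. With this dictionary, both parts reduce to transporting truth of $\phi$ on $w$ to truth of $\phi_X\wedge alt_p$ on $w'$ and back, for which Lemma~\ref{lem_altcolor} is the engine.

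For part~\ref{lem_blink1} I would take a winning strategy $\sigma$ for Player~$0$ in $(\arena,\phi)$ with respect to $\alpha$ and lift it to $\arena'$: at the significant vertices she plays the successor prescribed by $\sigma$, and at each colour vertex she paints a fixed $k$-spaced colouring, for instance by flipping the colour every $k$ steps, where $k=\max_{z\in\var(\phi)\setminus X}\alpha(z)$. Any resulting play projects to a play consistent with $\sigma$, so its trace satisfies $(w,0,\alpha)\models\phi$, while the painted word $w'$ is $k$-spaced by construction; Lemma~\ref{lem_altcolor}.\ref{lem_altcolor1} then gives $(w',0,\alpha_{\restriction X})\models\phi_X\wedge alt_p$, so the lifted strategy is winning under blinking semantics.

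Part~\ref{lem_blink2} is the substantial direction. Here I would not start from an arbitrary strategy but invoke Remark~\ref{rem_blink_fs} to fix a \emph{finite-state} winning strategy $\sigma'$ for Player~$0$, say with memory~$\mem$. The crux is to extract a single $k$ such that \emph{every} play consistent with $\sigma'$ is $k$-bounded. I would argue this inside the finite graph obtained from $\arena'\times\mem$ by fixing Player~$0$'s moves according to $\sigma'$ and letting only Player~$1$'s choices branch: call the transition from one significant configuration $(\rho_n,b_n)$ to the next $(\rho_{n+1},b_{n+1})$ colour-preserving if $b_n=b_{n+1}$. A cycle in this graph made up solely of colour-preserving transitions could be traversed forever by Player~$1$, yielding a consistent play whose colour is eventually constant and hence violating the $alt_p$ conjunct that $\sigma'$ enforces. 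Thus the colour-preserving subgraph is acyclic, and the length of its longest path bounds the distance between colour changes on all consistent plays by some~$k$. Setting $\beta_{\restriction X}=\alpha$ and $\beta(z)=2k$ for $z\in\var(\phi)\setminus X$, and projecting $\sigma'$ to a strategy $\sigma$ on $\arena$ (the colour decisions being absorbed into the memory), every play consistent with $\sigma$ carries a $k$-bounded colouring $w'$ satisfying $\phi_X$, so Lemma~\ref{lem_altcolor}.\ref{lem_altcolor2} delivers $(w,0,\beta)\models\phi$; hence $\sigma$ wins $(\arena,\phi)$ for a $\beta$ agreeing with $\alpha$ on $X$.

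The main obstacle is exactly this uniform-boundedness step: without finiteness of the strategy a winning Player~$0$ could enforce $alt_p$ while letting the gaps between colour changes grow without bound, so that no fixed $k$ works and Lemma~\ref{lem_altcolor}.\ref{lem_altcolor2} becomes inapplicable — the finite-state assumption together with the $alt_p$ conjunct is precisely what the acyclicity argument exploits to rule this out. The opponent's form of part~\ref{lem_blink2} is handled dually but, since the colouring is now outside Player~$1$'s control, by a different trick: she simulates her blinking strategy against the particular colouring that flips at every step, which is $1$-spaced, so that a falsification of $\phi_X\wedge alt_p$ on this spaced word is converted, via the contrapositive of Lemma~\ref{lem_altcolor}.\ref{lem_altcolor1}, into a falsification of $\phi$ on the projected word for a suitable~$\beta$.
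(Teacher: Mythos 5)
Your treatment of Player~$0$ is correct and is essentially the paper's own argument: part~\ref{lem_blink1} by lifting a winning strategy and having Player~$0$ paint a fixed $k$-spaced coloring so that Lemma~\ref{lem_altcolor}.\ref{lem_altcolor1} applies, and part~\ref{lem_blink2} by first invoking Remark~\ref{rem_blink_fs} to replace the given strategy by a \emph{finite-state} one, then arguing that, since $alt_p$ is enforced, the color-preserving transitions in the strategy graph form an acyclic subgraph, so \emph{all} consistent plays are uniformly $k$-bounded, at which point Lemma~\ref{lem_altcolor}.\ref{lem_altcolor2} applies. This uniform-boundedness step is exactly the point the paper itself isolates (``the proof relies on the existence of finite-state winning strategies which necessarily produce only $k$-bounded plays for some fixed $k$, since $alt_p$ is part of the winning condition''), and you carry it out correctly. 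Your all-flipping simulation for Player~$1$ in part~\ref{lem_blink2} is also sound, with the small proviso that the everywhere-alternating coloring is $k$-spaced only for $k\le 2$, so the valuation $\beta$ it yields must set $\beta(z)\le 2$ for $z\notin X$; the existential form of the statement permits this.

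There is, however, one case you never address: part~\ref{lem_blink1} for Player~$1$, which the lemma's ``Player~$i$'' phrasing does claim. Notably, this is not an omission you could repair, because that case is false as stated. Consider the arena with vertices $v_0$ and $u$, edges $v_0\rightarrow u$ and $u\rightarrow u$ (so neither player ever has a choice), $\ell(v_0)=\emptyset$, $\ell(u)=\{q\}$, together with $\phi=\F_{\le x}q$, $X=\emptyset$, and $\alpha(x)=0$. Player~$1$ wins $(\arena,\phi)$ with respect to $\alpha$, since $q\notin\ell(v_0)$. But in $(\arena',\phi_X\wedge alt_p)$ Player~$0$ owns the color vertices and can simply alternate the color at every significant position: then $alt_p$ holds, and the relaxed eventuality (``$q$ within one color change'') holds because $q$ is true from the second significant position onwards; hence Player~$0$ wins under blinking semantics and Player~$1$ does not. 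The structural reason is the very phenomenon you identify in part~\ref{lem_blink2}: the blinking game constrains the variables outside $X$ only up to \emph{some} bound $2k$, so it cannot witness the failure of $\phi$ at a small valuation such as $\alpha(x)=0$. Your decision to prove the statement for Player~$0$ is therefore the mathematically correct one (and it is the only case the paper uses, in the displayed equation following the lemma), but a complete answer should say explicitly that part~\ref{lem_blink1} for $i=1$ is not provable rather than silently restricting to $i=0$.
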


Applying the lemma to our problem, we have
\begin{equation*}
\min_{\alpha\in\W^0_\game}\min_{x\in\var(\phi)}\alpha(x)=\min_{x\in\var(\phi)}\min\{\alpha(x)\mid \text{ Player $0$ wins $(\arena',\phi_{\{x\}}\wedge alt_p)$}
\text{ w.r.t.\ $\alpha$ u. blinking semantics}\}\enspace.\end{equation*}
Since $\phi_{\{x\}}=\{x\}$, we have reduced the minimization problem to $|\var(\phi)|$ many $\prompt$ optimization problems, albeit under blinking semantics. However, the proof presented in the following can easily be adapted to deal with blinking semantics.

\ref{thm_pltlopt2}.) This problem can directly be reduced to a $\prompt$ optimization problem:
let $\phi_{\F}'$ be the $\prompt$ formula obtained from $\phi_{\F}$ by renaming
each $x\in\var(\phi_{\F})$ to $z$ and let $\game' \coloneq (\arena_{\F},
\phi_{\F}')$. Then, $\min_{ \alpha \in \W_{\game_{\F}}^0} \max_{
x\in\var(\phi_{\F})}\alpha(x)=\min_{ \alpha \in \W_{\game'}^0} \alpha(z)$, due
to up\-wards-closure of $\W_{\game_{\F}}^0$.

\ref{thm_pltlopt3}.) For every $y\in\var(\phi_{\G})$ let $\phi_y$ be obtained
from $\phi_{\G}$ by replacing every subformula $\G_{\le z}\psi$ for $z\not=y$ by
$\psi$ and let $\game_y\coloneq(\arena_{\G},\phi_y)$. Then, we have $\max_{
\alpha\in\W_{\game_{\G}}^0}\max_{y\in\var(\phi_{\G})}\alpha(y)=\max_{y \in
\var(\phi_{\G})}\max_{\alpha\in\W_{\game_{y}}^0}\alpha(y)$, due to
 downwards-closure of $\W_{\game_{\G}}^0$. Hence, we have reduced the original
problem to $|\var(\phi_{\G})|$ maximization problems for a $\pltlg$ game with
a single variable, which are discussed below.

\ref{thm_pltlopt4}.) Let $\phi_{\G}'$ be obtained from $\phi_{\G}$ by renaming
 every variable in $\phi_{\G}$ to $z$ and let $\game'=(\arena_{\G},\phi_{\G}')$.
Then, $\max_{ \alpha \in \W_{\game_{\G}}^0} \min_{ y \in \var(\phi_{\G})}
\alpha(y) = \max_{ \alpha \in \W_{\game'}^0} \alpha(z)$, again due to
downwards-closure of $\W_{\game_{\G}}^0$. Again, we have reduced the original
problem to a maximization problem for a $\pltlg$ game with a single variable.\smallskip

To finish the reductions we translate a $\pltlg$ optimization problem with a
single variable into a $\prompt$ optimization problem: let $\game=(\arena,\phi)$
be a $\pltlg$ game with $\var(\phi)=\{y\}\subseteq\Yvar$. Then, we have
$\max_{\alpha\in\W_{\game}^0}\alpha(y)=\max_{\alpha\in\W_{\dual{\game}}^1}
\alpha(y)=\min_{\alpha\in\W_{\dual{\game}}^0}\alpha(y)+1$,
due to the closure properties and Lemma~\ref{lem_dual}. As $\dual{\game}$ is a $\prompt$ game, we achieved
our goal.


All reductions increase the size of the arena at most quadratically and the size of the winning condition at most linearly. Furthermore, to minimize the minimal parameter value in a $\pltlf$ game and to maximize the maximal parameter value in a $\pltlg$ game, we have to solve $|\var(\phi)|$ many $\prompt$ optimization problems (for the other two problems just one) to solve the original unipolar optimization problem with winning condition $\phi$. Thus, it suffices to show that a $\prompt$ optimization problem can be solved in doubly-exponential time.

So, let $\game=(\arena,\phi)$ be a $\prompt$ game with $\var(\phi)=\{x\}$.
%
%
%
If $\W_{\game}^0\not=\emptyset$, then Corollary~\ref{cor_pltlsize}
yields $\min_{ \alpha \in \W_{\game}^0}
\alpha(x) \le\;  k\;\coloneq 2(|\arena| \cdot f(|\phi|)+1)\in
|\arena|\cdot 2^{2^{\bigo(|\phi|)}}$.
Let $\alpha_{n}$ be the valuation mapping $x$ to $n$. To determine
$\min_{ \alpha \in \W_{\game}^0}\alpha(x)$, it suffices to find the smallest $n<k$ such that
$\alpha_{n}\in\W_{\game}^0$. As the number
of such valuations $\alpha_{n}$ is equal to $k$,
it suffices to show that $\alpha_{ n } \in \W_{ \game}^0$ can be
decided in doubly-exponential time in the size of $\game$,
provided that $n< k$. This is achieved by a game reduction to a parity
game.

Fix a valuation $\alpha$ and remember that $\phi_{\alpha}$ is an $\ltl$ formula (see Remark~\ref{rem_transl}). Now, observe that a deterministic parity automaton
$\autp=(Q,2^P,q_0,\delta,c)$ with $L(\autp)=\{w\in (2^{P})^{\omega}
\mid (w,0) \models \phi_{\alpha}\}$ can be turned into a memory structure
$\mem=(Q,q_0,\update)$ for $(\arena, \phi_{\alpha})$ by defining
$\update(q,v) = \delta(q, \ell(v))$. Then, we have $(\arena, \phi_{ \alpha})
\le_{ \mem }( \arena \times \mem, c')$, where $c'(v,q)=c(q)$. Hence,
the Remarks \ref{rem_transl} and \ref{rem_reductions} yield $\alpha\in\W_{\game}^0$ if and only if Player~$0$ wins $(\arena\times\mem,c')$.

\begin{lemma}
\label{lem_autsize}
Let $\alpha$ be a variable valuation and $\phi$ be a $\prompt$ formula with $\var(\phi)=\{x\}$.
There exists a deterministic parity automaton $\autp$ recognizing the language $\{w\in (2^{P})^{\omega} \mid (w,0) \models \phi_\alpha\}$ such that
$|\autp| \in 2^{2^{\bigo(|\phi|)}}\cdot ( \alpha(x)+1 )^{2^{\bigo(|\phi|)}}$
and $\autp$ has $2^{\bigo(|\phi|)}$ many colors.
\end{lemma}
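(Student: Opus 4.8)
The plan is to build $\autp$ as a two-layer product: a deterministic parity automaton for the temporal ``skeleton'' of $\phi$, which contributes the doubly-exponential factor $2^{2^{\bigo(|\phi|)}}$ and the $2^{\bigo(|\phi|)}$ colors, together with a bank of bounded counters enforcing the time bound, which contributes the factor $(\alpha(x)+1)^{2^{\bigo(|\phi|)}}$. First I would rule out the naive route: expanding each subformula $\F_{\le x}\psi$ of $\phi$ into the disjunction $\psi\vee\X\psi\vee\cdots\vee\X^{\alpha(x)}\psi$ turns $\phi$ into an $\ltl$ formula $\phi_\alpha$ of size $\bigo(|\phi|\cdot\alpha(x))$ (Remark~\ref{rem_transl}), and the standard $\ltl$-to-deterministic-parity translation then yields an automaton with $2^{2^{\bigo(|\phi|\cdot\alpha(x))}}$ states, which is \emph{doubly} exponential in $\alpha(x)$. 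Since the target bound is only \emph{polynomial} in $\alpha(x)$ (with exponent $2^{\bigo(|\phi|)}$), the counting needed for the bounded operators must be kept out of the determinization.

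Concretely, I would first take the formula $\phi^{\F}$ obtained from $\phi$ by replacing every $\F_{\le x}$ with the unbounded $\F$, and build a nondeterministic B\"uchi automaton $\autn$ for it by the usual construction over $\cl(\phi)$, so that $|\autn|\in 2^{\bigo(|\phi|)}$ and each state records which eventuality obligations are currently pending. Determinizing $\autn$ (Safra/Piterman) gives a deterministic parity automaton $\mathfrak{D}$ with $2^{2^{\bigo(|\phi|)}}$ states and $2^{\bigo(|\phi|)}$ colors. To recover the \emph{bounded} semantics I would then attach, to each of the $|\autn|\in 2^{\bigo(|\phi|)}$ states that may occur inside a macrostate of $\mathfrak{D}$, one counter ranging over $\{0,\dots,\alpha(x)\}$ that measures the age of its pending $\F_{\le x}$-obligation; these counters are incremented and reset deterministically along the run, and a run is made losing as soon as some counter would exceed $\alpha(x)$. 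The counter bank thus ranges over $(\alpha(x)+1)^{2^{\bigo(|\phi|)}}$ values, and the product $\autp=\mathfrak{D}\times(\text{counters})$ has exactly the claimed number of states. The extra ``overflow'' safety requirement is folded into the parity condition by reserving one bad odd priority, which leaves the number of colors in $2^{\bigo(|\phi|)}$.

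The size and color arithmetic, and the folding of the safety condition into the parity acceptance, are routine. The main obstacle is the \textbf{soundness} of attaching the counters \emph{after} determinization: one has to show that the per-thread counters, carried through the subset/Safra structure, accept $w$ precisely when $(w,0,\alpha)\models\phi$. The delicate point is the merging of runs inside a macrostate, where two runs of $\autn$ reach the same state with different counter values and the construction must retain the right value (the largest pending age for that obligation) so that neither a spurious acceptance nor a spurious rejection occurs. I would prove the two inclusions separately: for one direction, an accepting run of $\mathfrak{D}$ together with a witness that every $\F_{\le x}$-obligation is met within $\alpha(x)$ steps keeps all counters bounded; for the other, a run surviving the overflow condition yields, by the invariant maintained on the counters, a witness that each satisfied bounded eventuality is fulfilled in time. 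It is worth noting that keeping the counters \emph{inside} $\autn$ and determinizing afterwards would be unsound for the size bound, since it would make the state space doubly exponential in $\alpha(x)$; this is exactly why the counter dimension is $2^{\bigo(|\phi|)}$ (one per skeleton state) rather than $\bigo(|\phi|)$ (one per bounded operator).
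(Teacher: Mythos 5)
Your overall architecture---relax $\F_{\le x}$ to $\F$, build a nondeterministic B\"uchi automaton $\autn$ of size $2^{\bigo(|\phi|)}$, determinize, and only then attach bounded counters that age pending obligations and kill a thread on overflow---is exactly the decomposition the paper uses, and your size and color accounting is correct. The genuine gap is the step you yourself flag as the main obstacle: the merging of runs inside a macrostate. Your proposed resolution, retaining the \emph{largest} pending age, is unsound: if a run of $\autn$ whose age for some constraint is $1$ merges (in a state outside the relevant accepting set) with a run whose age is $\alpha(x)$, the merged counter overflows on the next step and the thread is killed, even though the first run continues within the bound---a spurious rejection. Taking the minimum instead also fails: a single variable $x$ may parameterize several subformulae $\F_{\le x}\psi_1,\F_{\le x}\psi_2,\dots$, so there are several constraint sets with the same bound, and the minima for different constraints may be witnessed by \emph{different} runs; worse, the run that the Safra/Piterman tree certifies as B\"uchi-accepting need not be any run whose counters stay bounded. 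Acceptance requires one \emph{single} run that simultaneously satisfies all infix constraints and the B\"uchi condition, and once runs have merged, the deterministic structure has lost the information which prefix the surviving counter values belong to, so no local merging rule can certify this.

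The missing ingredient is that the merging problem must be made to disappear, not resolved: the LTL-style construction applied to the relaxed formula yields an \emph{unambiguous} generalized B\"uchi automaton (Lemma~\ref{lem_gnba}.\ref{lem_gnba2}), hence, after removing unproductive states, a \emph{non-confluent} one---two runs on the same word never merge. Consequently every state occurring in the deterministic structure identifies a unique finite run of $\autn$ (Lemma~\ref{lem_parrunprop}.\ref{lem_parrunprop1}), the per-state counters are well defined without any merging rule, and killing a state on overflow kills exactly the one run that violated its bound. This is also why the paper does not use a general Safra/Piterman determinization, but a construction tailored to non-confluent automata (Construction~\ref{cons_countpar}, following~\cite{MS08}), into which the counters embed soundly (Lemma~\ref{lem_parcorrect}). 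To repair your argument you would have to add precisely this: prove unambiguity of $\autn$, pass to a non-confluent automaton, and replace Safra/Piterman by a determinization whose tracked states correspond to individual runs; as written, the two inclusions you defer cannot both be established.
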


For a valuation $\alpha_{n}$ with $n < k$, we have $|\autp|\in 2^{2^{\bigo(|\arena|+|\phi|)}}$ with $2^{\bigo(|\phi|)}$ many colors. Thus, Proposition~\ref{prop_det}.\ref{prop_det_parity} implies that $( \arena \times \mem, c')$ can be solved in doubly-exponential time in the size of $\game$, which suffices to prove Theorem~\ref{thm_pltlopt}, as we have to solve at most doubly-exponentially many parity games\footnote{This can be improved to exponentially many by binary search.}, each of which can be solved in doubly-exponential time. Thus, it remains to prove Lemma~\ref{lem_autsize}.

Furthermore, we have seen that the automaton $\autp$ for the \textit{minimal} $\alpha_{n}$ can easily be turned into a finite-state winning strategy for $\game$ realizing $\min_{\alpha\in\W_{\game}^0}\alpha(x)$. To obtain a winning strategy for the general case of an $\pltlf$ (respectively $\pltlg$) game it is necessary to construct a deterministic parity automaton for the $\pltlf$ formula $\phi$ (respectively $\neg\phi$) as described below. In case of a $\pltlg$ game, we need to complement the automaton, which is achieved by incrementing the priority of each state by one.

We construct an automaton as required in Lemma~\ref{lem_autsize} in the remainder of this section. Note that the naive approach of constructing a deterministic parity automaton for the $\ltl$ formula $\phi_{\alpha_n}$ yields an automaton that recognizes the desired language,
but is of quadruply-exponential size, if $n$ is close to $k$. The problem arises from the fact that $\phi_{\alpha_n}$ uses a disjunction of nested
next-operators of depth $n$ to be able to count up to
$n$. This (doubly-exponential)
\textit{counter} is hardwired into the formula $\phi_{\alpha_{n}}$ and
thus leads to a quadruply-exponential blowup when turning $\phi_{\alpha_{n}}$ into
a deterministic parity automaton, since turning $\ltl$ formulae into deterministic parity
automata necessarily incurs a doubly-exponential blowup~\cite{KV98}.

To obtain our results, we decouple the counter from the formula by
relaxing parameterized eventualities to plain eventualities. We translate the relaxed formula into a generalized B\"uchi automaton, which is then turned in a B\"uchi automaton.
By placing an additional constraint on accepting runs we take care of the bound on
the (now relaxed) parameterized operators. As these automata are unambiguous, we also end up with a non-confluent B\"uchi automaton, which is then determinized into a parity automaton. Only then, the additional constraint is added to the parity automaton in the form of a counter that tracks (and aborts, if the counter is overrun) different runs of the B\"uchi automaton. This way, we obtain an automaton that is equivalent to the (unrelaxed) $\prompt$ formula with respect to $\alpha_{n}$. To add these counters, it is crucial to have a non-confluent B\"uchi automaton, as such an automaton has at most $|Q|$ runs which have to be tracked by the counter.

In the following we extend known constructions for translating an $\ltl$ formula into a non-deter\-mi\-nistic B\"uchi automaton and for translating a non-deterministic B\"uchi automaton into a deterministic parity automaton. In the first step we have to deal with the additional constraints, which do not appear in the classical translation problem. In the second step, we have to simulate these constraints with the states of the parity automaton, which requires changes to this translation as well. Since our proof technique can deal with several parameters, we consider the more general case of a $\pltlf$ formula instead of a $\prompt$ formula.\smallskip

\noindent\textbf{From $\pltlf$ to generalized B\"uchi Automata.}
We begin by constructing a generalized B\"uchi automaton from a $\pltlf$ formula
using a slight adaptation of a standard textbook method (see~\cite{BK08}). We ignore
the parameters when defining the transition relation, i.e., we treat a
parameterized eventually as a plain eventually. The bounds are taken care of
by additional constraints on accepting runs.

Given a $\pltlf$ formula $\phi$ we define its closure $\cl(\phi)$ to be the set
of subformulae of $\phi$.
A set $B\subseteq \cl(\phi)$ is consistent, if the following properties are
satisfied:

\begin{center}
\begin{tabular}{ll}
\begin{minipage}{0.5\linewidth}
\begin{itemize}
\item $p\in B$ if and only if $\neg p\notin B$ for every $p\in P$.
\item $\psi_1\wedge\psi_2\in B$ if and only if $\psi_1\in B$ and $\psi_2\in B$.
\item $\psi_1\vee\psi_2\in B$ if and only if $\psi_1\in B$ or $\psi_2\in B$. 
\end{itemize}
\end{minipage}

&

\begin{minipage}{0.4\linewidth}
\begin{itemize}
\item $\psi_2\in B$ implies $\psi_1\U\psi_2\in B$.
\item $\psi_1,\psi_2\in B$ implies $\psi_1\R\psi_2\in B$.
\item $\psi_1\in B$ implies $\F_{\le x}\psi_1\in B$.
\end{itemize}
\end{minipage}
\\
\end{tabular}
\end{center}
The set of consistent subsets is denoted by $\curlyC(\phi)\subseteq
2^{\cl(\phi)}$.

\begin{construction}\label{cons_gnba}
Given a $\pltlf$ formula $\phi$, we define the generalized B\"uchi automaton\newline
$\aut_{\phi} = (Q, 2^P, Q_0, \Delta, \curlyF)$ by
\begin{itemize}
\item $Q=\curlyC(\phi)$ and $Q_0=\{B\in \curlyC(\phi)\mid \phi\in B\}$,
\item $(B,a,B')\in\Delta$ if and only if
\begin{itemize}
\item $B\cap P=a$,
\item $\X\psi_1\in B$ if and only if $\psi_1\in B'$,
\item $\psi_1\U\psi_2\in B$ if and only if $\psi_2\in B$ or ($\psi_1\in B$ and
$\psi_1\U\psi_2\in B'$),
\item $\psi_1\R\psi_2\in B$ if and only if $\psi_2\in B$ and ($\psi_1\in B$ or
$\psi_1\R\psi_2\in B'$), and
\item $\F_{\le x}\psi_1\in B$ if and only if $\psi_1\in B$ or $\F_{\le x}\psi_1\in B'$.
\end{itemize}
\item $\curlyF=\curlyF_{\U}\cup\curlyF_{\R}\cup\curlyF_{\F_{\le}}$ where
\begin{itemize}

\item $\curlyF_{\U}=\{F_{\psi_1\U\psi_2}\mid \psi_1\U\psi_2\in\cl(\phi)\}$ with
$F_{\psi_1\U\psi_2}=\{B\in \curlyC(\phi)
\mid \psi_1\U\psi_2\notin B \text{ or } \psi_2\in B\}$,

\item $\curlyF_{\R}=\{F_{\psi_1\R\psi_2}\mid \psi_1\R\psi_2\in\cl(\phi)\}$ with
$F_{\psi_1\R\psi_2}=\{B\in \curlyC(\phi)
\mid \psi_1\R\psi_2\in B \text{ or } \psi_2\notin B\}$, and

\item $\curlyF_{\F_{\le}}=\{F_{\F_{\le x}\psi_1}\mid \F_{\le
x}\psi_1\in\cl(\phi)\}$ with $F_{\F_{\le x}\psi_1}=\{B\in \curlyC(\phi)
\mid \F_{\le x}\psi_1\notin B \text{ or } \psi_1\in B\}$.

\end{itemize}
\end{itemize}
\end{construction}\smallskip

\begin{lemma}\label{lem_gnba}
Let $\phi\in\pltlf$ and let $\aut_{\phi}$ be defined as in
Construction~\ref{cons_gnba}.
\begin{enumerate}
\item\label{lem_gnba1} 
$(w,0,\alpha)\models\phi$ if and only if $\aut_{\phi}$ has an
accepting run $\rho$ on $w$ such that each $F_{\F_{\le
x}\psi_1}\in\curlyF_{\F_{\le}}$ is visited at least once in
every infix of $\rho$ of length $\alpha(x)+1$.
\item\label{lem_gnba2} $\aut_{\phi}$ is unambiguous.
\item\label{lem_gnba3} $|\aut_{\phi}|\le 2^{|\phi|}$ and $|\curlyF|<|\phi|$.
\end{enumerate}
\end{lemma}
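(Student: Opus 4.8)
The statement bundles three independent assertions about the generalized Büchi automaton $\aut_\phi$ from Construction~\ref{cons_gnba}. I would prove them in the order \ref{lem_gnba3}, \ref{lem_gnba2}, \ref{lem_gnba1}, since the size bound is immediate, unambiguity is a structural induction, and the language characterization is the substantive part that benefits from having the other two in hand.

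Let me think about each.

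**Claim 3 (size).** Let me start here. $Q = \curlyC(\phi) \subseteq 2^{\cl(\phi)}$, and $\cl(\phi)$ is the set of subformulae of $\phi$, so $|\cl(\phi)| \le |\phi|$ (recall $|\phi|$ counts distinct subformulae). Hence $|\aut_\phi| = |Q| \le 2^{|\phi|}$. For the acceptance family, $\curlyF = \curlyF_\U \cup \curlyF_\R \cup \curlyF_{\F_\le}$ contains exactly one set $F_\psi$ per $\U$-, $\R$-, or $\F_\le$-subformula of $\phi$; each such operator is a distinct subformula, and since $\phi$ must also contain at least one propositional subformula not of these forms, the count is strictly below $|\phi|$, giving $|\curlyF| < |\phi|$. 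This is a direct counting argument.

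**Claim 2 (unambiguity).** I want to show $\aut_\phi$ has at most one accepting run on each $w$. Let me think about why. Recall unambiguity means at most one accepting run, not at most one run. The standard argument is: if $\rho = B_0 B_1 B_2 \dots$ is an accepting run on $w$, then I claim each $B_n$ is forced to be exactly $\{\psi \in \cl(\phi) \mid (w,n,\alpha) \models \psi\}$ — the set of subformulae true at position $n$ — where $\alpha$ is read off so that parameterized eventualities are treated as plain eventualities (i.e. $\F_{\le x}$ behaves like $\F$). I would prove by induction on the structure of $\psi \in \cl(\phi)$ that $\psi \in B_n$ iff $\psi$ holds at position $n$ of $w$ under the "plain-$\F$" reading. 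The propositional and $\X$ cases are immediate from the consistency conditions and the transition relation. For $\U$, $\R$, and $\F_\le$, the transition conditions pin down the "unrolling," and the generalized Büchi acceptance sets $F_{\psi_1\U\psi_2}$, $F_{\F_{\le x}\psi_1}$ (resp.\ $F_{\psi_1\R\psi_2}$) are precisely what rule out the spurious run that keeps an eventuality (resp.\ drops a release) forever unfulfilled. Thus the run is uniquely determined, so there is at most one accepting run and the automaton is unambiguous.

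**Claim 1 (language characterization) — the main obstacle.** This is where the real work lies, and it is a refinement of the classical correctness proof for the $\ltl$-to-GNBA translation, complicated by the parameterized eventualities. For the forward direction, given $(w,0,\alpha)\models\phi$, I would define $B_n = \{\psi \in \cl(\phi) \mid (w,n,\alpha)\models\psi\}$ and verify it is a consistent set forming a valid run: the transition conditions hold by the one-step semantics of each operator, $\phi \in B_0$ gives $B_0 \in Q_0$, and generalized Büchi acceptance holds because every eventuality/release obligation that is pending is eventually discharged. The extra quantitative condition — each $F_{\F_{\le x}\psi_1}$ is visited within every infix of length $\alpha(x)+1$ — follows because $(w,n,\alpha)\models\F_{\le x}\psi_1$ forces $\psi_1$ to hold within $\alpha(x)$ steps, so within any $\alpha(x)+1$ consecutive positions either $\F_{\le x}\psi_1 \notin B_m$ or $\psi_1 \in B_m$ holds at some $m$, which is exactly membership in $F_{\F_{\le x}\psi_1}$. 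For the converse, given an accepting run satisfying the spacing constraint, the uniqueness established in Claim~2 lets me identify $B_n$ with the set of formulae true at $n$; then the spacing constraint guarantees that whenever $\F_{\le x}\psi_1$ is asserted, $\psi_1$ is realized within $\alpha(x)$ steps, recovering $(w,n,\alpha)\models\F_{\le x}\psi_1$, while the ordinary $\U$/$\R$ acceptance sets handle the unbounded operators as in the classical proof. The hard part is the careful bookkeeping linking the "every infix of length $\alpha(x)+1$" condition to the semantic bound $\alpha(x)$ on $\F_{\le x}$, in both directions, since the automaton's transitions merely propagate the eventuality as if unbounded and all the quantitative content must be carried by the infix-visiting condition.
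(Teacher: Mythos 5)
Your proposal is correct and follows essentially the same route as the paper: the forward direction of part~\ref{lem_gnba1} builds the run from the truth sets $B_n=\{\psi\in\cl(\phi)\mid(w,n,\alpha)\models\psi\}$ and checks acceptance plus the infix condition, while the backward direction and unambiguity both rest on the structural induction relating membership in the run's states to truth, your ``plain-$\F$ reading'' being exactly the paper's relaxed formula $e(\phi)$. The only difference is one of bookkeeping: the paper carries out the induction inside part~\ref{lem_gnba1} (with the spacing constraint, under parameterized semantics) and derives part~\ref{lem_gnba2} from it via the isomorphism of $\aut_{\phi}$ and $\aut_{e(\phi)}$, whereas you establish part~\ref{lem_gnba2}'s induction first and then upgrade it with the spacing constraint to get part~\ref{lem_gnba1} --- a reversal of dependencies, not of ideas.
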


\begin{proof}
\ref{lem_gnba1}.)
Let $(w,0,\alpha)\models\phi$. For each $n$ define $B_n=\{\psi\in\cl(\phi)\mid
(w,n,\alpha)\models\psi\}$ and show that $\rho=B_0B_1B_2\dots$ is an accepting
run of $\aut_{\phi}$ such that each $F_{\F_{\le x}\psi_1}\in\curlyF_{\F_{\le}}$
is visited at least once in every infix of $\rho$ of length $\alpha(x)+1$.
The semantics of $\pltl$ guarantee that each $B_n$ is consistent, $B_0\in Q_0$
follows from $(w,0,\alpha)\models\phi$, and $(B_n,w_{n},B_{n+1})\in\Delta$ for
every $n$ is due to the semantics of $\pltl$. Thus, the sequence
$B_0B_1B_2\dots$ is a run.
Assume that some $F_{\psi_1\U\psi_2}$ is visited only finitely often, i.e.,
there exists an index~$n$ such that for every $n'\ge n$ we have
$(w,n',\alpha)\models\psi_1\U\psi_2$ and $(w,n',\alpha)\not\models\psi_2$. This
contradicts the semantics of the until-operator, which guarantee a position
$m\ge n$ such that $(w,m,\alpha)\models\psi_2$, if
$(w,n,\alpha)\models\psi_1\U\psi_2$.
Now, assume that some $F_{\psi_1\R\psi_2}$ is visited only finitely often, i.e.,
there exists an index~$n$ such that for every $n'\ge n$ we have
$(w,n',\alpha)\not\models\psi_1\R\psi_2$ and $(w,n',\alpha)\models\psi_2$. This
contradicts the semantics of the release-operator, which state
$(w,n,\alpha)\models\psi_1\R\psi_2$, if $\psi_2$ holds at every position $n'\ge
n$.
Finally, assume that some $F_{\F_{\le x}\psi_1}\in\curlyF_{\F_{\le}}$ is not
visited in an infix of $B_0B_1B_2\dots$ of length $\alpha(x)+1$, i.e., there is
some index $n$ such that $(w,n,\alpha)\models\F_{\le x}\psi_1$ and
$(w,n+j,\alpha)\not\models\psi_1$ for every $j$ in the range $0\le j\le
\alpha(x)$. This contradicts the semantics of the parameterized eventually,
which guarantee the existence of an index~$k$ in the range $0\le k \le\alpha(x)$ such that
$(w,n+k,\alpha)\models\psi_1$.
Hence, $B_0B_1B_2\dots$ is an accepting run such that each $F_{\F_{\le
x}\psi_1}\in\curlyF_{\F_{\le}}$ is visited at least once in every infix of
$B_0B_1B_2\dots$ of length $\alpha(x)+1$.

For the other direction, let $\rho=B_0B_1B_2\ldots$ be an accepting run of
$\aut_{\phi}$ on $w$ such that each $F_{\F_{\le x}\psi_1}\in\curlyF_{\F_{\le}}$
is visited at least once in
every infix of $\rho$ of length $\alpha(x)+1$. A structural induction over
$\phi$ shows that $\psi\in B_n$ if and only if $(w,n,\alpha)\models\psi$. This suffices,
since we have $\phi\in B_0$.

\ref{lem_gnba2}.)
Let $e(\phi)$ be the formula obtained from $\phi\in\pltlf$ by replacing every
parameterized eventually $\F_{\le x}$ by an eventually $\F$. The
automata $\aut_{\phi}$ and $\aut_{e(\phi)}$ are isomorphic. Thus, it suffices
to show that $\aut_{e(\phi)}$ is unambiguous. So, assume there are two
accepting runs $B_0B_1B_2\dots$ and $B_0'B_1'B_2'\dots$ on an
$\omega$-word $w$ and let $n$ be an index such that $B_n\not=B_n'$, i.e., there
exists $\psi\in\cl(e(\phi))$ such that (w.l.o.g.) $\psi\in B_n$, but $\psi\notin
B_n'$.
In~\ref{lem_gnba1}.), we have shown that we have $\psi\in B_n$ (respectively
$\psi\in B_n'$) if and only if $(w,n)\models\psi$ (note that $\psi$ is an $\ltl$ formula,
hence we do not need to care about a variable valuation). Thus, we have
$(w,n)\models\psi$ (due to $\psi\in B_n$) and $(w,n)\not\models\psi$ (due to
$\psi\not\in B_n'$), which yields the desired contradiction.

\ref{lem_gnba3}.) Clear.
\end{proof}


\noindent\textbf{From generalized B\"uchi Automata to B\"uchi Automata.}
Now, we use a standard construction (see~\cite{BK08}) to turn a generalized B\"uchi
automaton $\aut=(Q,\Sigma,Q_0,\Delta,\{F_1,\ldots,F_k\})$ into a B\"uchi
automaton $\aut'=(Q',\Sigma,Q_0',\Delta',F')$ while preserving its language
(even under the additional constraints) and its unambiguity. The state set of
$\aut'$ is $Q\times \{0,1,\ldots,k\}$, where the first component is used to
simulate the behavior of $\aut$, while the second component is used to ensure
that every set $F_j$ is visited infinitely often.

\begin{lemma}\label{lem_nba}
Let $\aut=(Q,\Sigma,Q_0,\Delta,\{F_1,\ldots,F_k\})$ be a generalized B\"uchi
automaton. There exists a B\"uchi automaton $\aut'$ with state set
$Q\times\{0,1\dots,k\}$ such that the following holds:
\begin{enumerate}
\item\label{lem_nba4}  Let $\aut=\aut_{\phi}$ for some $\pltlf$ formula $\phi$
as in Construction~\ref{cons_gnba}. Then, $(w,0,\alpha)\models\phi$ if and only if $\aut'$
has an accepting run $(q_0,i_0)(q_1,i_1)(q_2,i_2)\ldots$ on $w$ such that each
$\F_{\F_{\le x}\psi_1}\in\curlyF_{\F_{\le}}$ is visited at least once in every
infix of $q_0q_1q_2\dots$ of length $\alpha(x)+1$.

\item\label{lem_nba2} $\aut'$ is unambiguous, if $\aut$ is unambiguous.
\item\label{lem_nba3} $|\aut'|=|\aut|\cdot(k+1) $
\end{enumerate}
\end{lemma}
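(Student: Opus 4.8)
The plan is to invoke the textbook generalized-B\"uchi-to-B\"uchi construction from~\cite{BK08} essentially verbatim, and then verify the two properties that go beyond the classical statement: preservation of unambiguity and, crucially, preservation of the window constraint from Lemma~\ref{lem_gnba}.\ref{lem_gnba1}. Recall the construction: the second component ranges over $\{0,1,\ldots,k\}$ and acts as a counter that cycles through $F_1,\ldots,F_k$, advancing from the level awaiting $F_j$ to the next level exactly when $F_j$ is visited and resetting upon completing a full cycle. The initial counter value is fixed to $0$, the B\"uchi accepting set consists of the states that complete a cycle, and a run is accepting under this single B\"uchi condition if and only if its first-component sequence visits every $F_j$ infinitely often. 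The size bound~\ref{lem_nba3} is then immediate, since $|Q'|=|Q|\cdot(k+1)$. The structural fact I would record first, and which drives both remaining claims, is that the counter value along any run is a deterministic function of the sequence of first components together with the fixed initial value $0$. Consequently, projection to the first component is a bijection between the runs of $\aut'$ and the runs of $\aut$: every run $q_0q_1q_2\ldots$ of $\aut$ lifts to a unique run of $\aut'$, and reading off the first components of a run of $\aut'$ returns exactly a run of $\aut$.

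For the language claim~\ref{lem_nba4}, I would chain equivalences. By Lemma~\ref{lem_gnba}.\ref{lem_gnba1}, $(w,0,\alpha)\models\phi$ holds if and only if $\aut=\aut_{\phi}$ has an accepting run $q_0q_1q_2\ldots$ that visits each $F_{\F_{\le x}\psi_1}\in\curlyF_{\F_{\le}}$ within every infix of length $\alpha(x)+1$. The standard correctness of the construction states that a run is accepting (i.e.\ hits every $F_j$, including those in $\curlyF_{\F_{\le}}$, infinitely often) if and only if its lift is accepting in $\aut'$. The point is that the window constraint refers only to the visits of the first-component sequence to the sets in $\curlyF_{\F_{\le}}$; since the first component of the lifted run is the original run verbatim, this constraint transfers unchanged in both directions. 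Hence $(w,0,\alpha)\models\phi$ if and only if $\aut'$ has an accepting run whose first component satisfies the window constraint, which is exactly the assertion of~\ref{lem_nba4}.

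For unambiguity~\ref{lem_nba2}, I would argue by contradiction: suppose $\aut$ is unambiguous but $\aut'$ has two distinct accepting runs on some $w$. Two distinct runs of $\aut'$ must already differ in their first components, because if their first components agreed, the determinacy of the counter (from the fixed initial value $0$) would force the second components to agree as well, making the two runs identical. By the correctness used for~\ref{lem_nba4}, the first component of each accepting run of $\aut'$ is an accepting run of $\aut$. Thus the two distinct accepting runs of $\aut'$ would yield two distinct accepting runs of $\aut$, contradicting unambiguity.

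The only genuinely non-routine point is transporting the window constraint, and the argument above isolates why this is painless: the constraint lives entirely in the first component and is therefore orthogonal to the counter machinery that implements the generalized acceptance condition. Everything else reduces to the cited construction together with the determinacy of the counter, so I expect no further obstacle.
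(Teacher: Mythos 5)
Your proof is correct and takes essentially the same approach as the paper, which treats the lemma as routine: it invokes the standard construction from~\cite{BK08} with state set $Q\times\{0,1,\ldots,k\}$, the first component simulating $\aut$ and the second acting as a cyclic counter through $F_1,\ldots,F_k$, and omits further detail. Your central observation---that the counter value is a deterministic function of the first-component sequence, so projection is a bijection on runs through which acceptance, the window constraint on $\curlyF_{\F_{\le}}$, and unambiguity all transfer---is exactly the reason this transfer is painless, and it fills in the argument the paper leaves implicit.
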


\noindent\textbf{From B\"uchi Automata to Deterministic Parity Automata.}
Now, we have to determinize an unambiguous (and therefore non-confluent) B\"uchi
automaton while incorporating the additional constraints on accepting runs.
Abstractly,
we are  given a non-confluent B\"uchi automaton $\aut$ and a finite set of tuples
$(F_j,b_j)\in 2^{Q}\times\nats_+$ and are only interested in runs
$\rho$ that visit a state from $F_j$ in every infix of $\rho$ of length $b_j$,
while visiting the accepting states of the B\"uchi
automaton infinitely often. Remember that a non-confluent automaton has at
most $|Q|$ finite runs on a finite word $w_0\cdots w_n$, which can be
 uniquely identified by their last state. Furthermore, for every last state $q$
of such a run, there is a unique state $p$ such that $p$ is the last state of a run
of the automaton on $w_0\cdots w_{n-1}$ and $(p,w_n,q)\in\Delta$. Thus, to
check the additional constraints on the runs, we can use counters $d(q,j)$ to
abort the run ending in $q$ if it did not visit $F_j$ for $b_j$ consecutive
states. The state space of the deterministic automaton we construct is the cartesian product of the state space of $\autp$ and the counters $d(q,j)$ for every $q$ and $j$, where $\autp$ is a deterministic automaton recognizing the language of $\aut$ without additional constraints. To prove Theorem~\ref{thm_pltlopt}, we want to use the deterministic automaton with counters as memory structure in a game reduction, which imposes additional requirements on its size and its acceptance condition.

The B\"uchi automaton we need to determinize is already of exponential size. Hence, we can spend another exponential for determinization, which is the typical complexity of a determinization procedure for B\"uchi automata. However, we have to carefully choose the acceptance condition of the deterministic automaton we construct: to prove the main theorem, we need an acceptance condition $\acc$ such that a game  with arena $\arena\times\mem$ and winning condition $\acc$ can be solved in doubly-exponential time, even if $\mem$ is already of doubly-exponential size. Furthermore, it is desirable to use a condition $\acc$ that guarantees Player~$0$ positional winning strategies: in this case, $\mem$ implements a finite-state winning strategy for her in the original $\pltlf$ game.

The parity condition satisfies all our requirements. Thus, we adapt a determinization construction~\cite{M10, MS08} tailored for non-confluent B\"uchi automata yielding a parity automaton. The automata obtained by this construction are slightly larger than the ones obtained by optimal constructions, but still small enough to satisfy our requirements on them. Another advantage of this construction is the fact that it is conceptually simpler than the constructions for arbitrary B\"uchi automata based on trees labeled with state sets. Nevertheless, it is possible to use another determinization construction, as long as it satisfies the requirements in terms of size and winning condition described above. 

Given a transition relation $\Delta\subseteq Q \times  \Sigma \times Q$, define
$\Delta(S,a)=\{q'\in Q\mid(q,a,q')\in\Delta \text{ for some }q\in S\}$.

\begin{construction}[\cite{MS08}]\label{cons_countpar}
Given a non-confluent B\"uchi automaton $\aut=(Q,\Sigma,Q_0,\Delta,F)$ and a
finite set $\{(F_1,b_1),\dots,(F_k,b_k)\}\subseteq 2^{Q} \times \nats_+$
construct the deterministic parity automaton $\autp=(Q',\Sigma,q_0',\delta,c)$
as follows: let $n=|Q|$ and define
\begin{itemize}
\item $Q'=\{((S_0,m_0),\ldots,(S_{n},m_{n}),d)\mid S_i\subseteq Q,\,
m_i\in\{0,1\},\, d\colon
Q\times\{1,\ldots,k\}\rightarrow\nats\cup\{\bot\} \text{ with } d(q,j)< b_j\text{ or }
d(q,j)=\bot\}$,
\item $q_0'=((S_0,0),(\emptyset,0),\dots,(\emptyset,0),d_0)$ with $d_0(q,j)=0$ if $q\in Q_0 \cap F_j$; $d_0(q,j)=1$ if $q\in Q_0 \setminus F_j$ and $1 < b_j$; and $d_0(q,j)= \bot$ otherwise; and $S_0=\{q\in Q_0\mid d(q,j)\not=\bot\text{ for every j}\}$.
\item We define the transition function $\delta$ only for reachable states:
$\delta(((S_0,m_0),\dots,(S_{n},m_{n}),d),a)=((S_0',m_0'),\dots,(S_{n}',m_{n}
'),d')$ where
\begin{itemize}
\item $d'(q,j)=\begin{cases}
0&  \text{if }q\in\Delta(S_0,a)\text{ and } q\in F_j,\\
d(p,j)+1&  \text{if }q\in\Delta(S_0,a)\text{, } q\notin F_j\text{, and
}d(p,j)+1<b_j,\\
\bot&  \text{if }q\in\Delta(S_0,a)\text{, } q\notin F_j\text{, and
}d(p,j)+1=b_j,\\
\bot&\text{if }q\notin\Delta(S_0,a),\\
\end{cases}$\enspace\\
where $p$ is the unique (due to non-confluence, see
Lemma~\ref{lem_parrunprop}.\ref{lem_parrunprop1}) state in $S_0$ with $(p,a,q)\in\Delta$.
Define $T=\{q\in Q\mid d'(q,j)\not=\bot\text{ for every }j\}$.

\item For the update of the state sets consider the sequence
$(S_0,m_0),\dots,(S_n,m_n)$ as a list containing tuples $(S,m)$. Remark~\ref{rem_parstateprop}.\ref{rem_parstateprop2} yields that there are at most $n$ non-empty sets $S_i$. First, we delete
all elements of the list containing the empty set by moving the non-empty state
sets to the left, without changing their order. Then, we replace every $S_i$ by
$\Delta(S_i,a)\cap T$. Finally, we append the state set $S_0\cap F$ to the end of
the list. Denote the length of the updated list by $\ell$.
Now, we clean up states. For $i=0,\dots,\ell-1$ do: if $S_i\setminus F$ is
a subset of $\bigcup_{i'=i+1}^{\ell-1} S_{i'}$ and $S_i\not=\emptyset$, then set $m_i'=1$, otherwise $m_i'=0$. Now, if $m_i=1$, then
remove the states contained in $S_i$ from every $S_{i'}$ with $i'>i$.
As we have $\ell\le n+1$, we can retranslate the updated list into a unique state tuple $((S_0',m_0'),\dots,(S_{n}',m_{n}'))$ (if the list is
too short, we pad it with $(\emptyset,0)$ at the end).
\end{itemize}

\item To define $c$ consider a reachable state $q=((S_0,m_0),\dots,(S_{n},m_{n}),d)$.
Let $e$ be the minimal $i$ such that $S_i=\emptyset$ and let $m$ be the minimal
$i$ such that $m_i=1$. Note that $e$ is always defined for reachable states (due to Remark~\ref{rem_parstateprop}.\ref{rem_parstateprop2}) and that $e\not=m$. We define

$c(q)=\begin{cases}
1&\text{if } e=0,\\
2m&\text{if } m<e,\\
2e-1&\text{if } 0<e< m\text{ or if $m$ undefined}.\\
\end{cases}$
\end{itemize}
\end{construction}
Note that in the definition of $\delta$, cleaning up the sets might introduce
new empty sets in the middle of the list. Also, note that $p$ in the definition of $d'$ is
only well defined when considering reachable states.
To prove the correctness of this construction, we need some properties of the
states of $\autp$.

\begin{remark}\label{rem_parstateprop} Let $q' = (( S_0,m_0), \dots,
(S_{n},m_{n}), d)$ be a reachable state of $\autp$.
\begin{enumerate}
\item\label{rem_parstateprop1} $S_i\subseteq S_0$ for every $i$.
\item\label{rem_parstateprop2} For every non-empty set $S_i$ there is a state $q_i\in S_i$ such that $q_i\notin S_{i'}$ for every $i'>i$. 
\item\label{rem_parstateprop3} $S_0=\{q\in Q\mid d(q,j)\not=\bot\text{ for
every }j\}$.
\end{enumerate}
\end{remark}

%
%
%
%
%

To improve readability, we say that a finite or infinite run $\rho$ satisfies~$\mathcal{O} =
\{(F_1,b_1),\ldots,(F_k,b_k)\}\subseteq 2^Q\times\nats_+$, if for
every $j$ we have that every infix of $\rho$ of length $b_j$ contains at least
one state from $F_j$. Next, we show that $d(q,j)$ counts the time since the
unique simulated run of $\aut$ ending in $q$ has visited $F_j$.

\begin{lemma}\label{lem_parrunprop} Let $q_0'q_1'q_2'\dots$ be the run of
$\autp$ on $w_0w_1w_2\dots\in\Sigma^\omega$ with $q_t' = (( S_0^t,
m_0^t), \dots, (S_{n}^t, m_{n}^t), d^t)$.
\begin{enumerate}
\item\label{lem_parrunprop1} If $q_t\in S_i^t$, then there exists a
(unique) finite run $q_0 q_1 \dots q_t$ of $\aut$ on $w_0w_1\dots w_{t-1}$ that satisfies~$\mathcal{O}$.

\item\label{lem_parrunprop2} Let $t_0<t_1$ be positions of $q_0'q_1'q_2'\dots$
and let $i$ be in the range $0\le i\le n$ such that
\begin{itemize}
\item $m_i^{t_0}=m_i^{t_1}=1$,
\item $S_i^t\not=\emptyset$ for every $t$ in the range $t_0\le t\le t_1$, and
\item $m_{i'}^t=0$ and $S_{i'}^t\not=\emptyset$ for every $t$ in the range $t_0\le
t\le t_1$ and every $i'<i$.
\end{itemize}
Then, every finite run $q_{t_0}\dots q_{t_1}$ of $\aut$ on $w_{t_0}\dots
w_{t_1-1}$ satisfying $q_t\in S_i^t$ for every $t$ in the range $t_0\le t\le
t_1$ visits a state in $F$ at least once.

\item\label{lem_parrunprop3} Let $q_0q_1q_2\dots$ be a run of $\aut$ on
$w_0w_1w_2\dots$ that satisfies~$\mathcal{O}$. Then, we have
$q_t\in S_0^t$ for every $t$.
\end{enumerate}
\end{lemma}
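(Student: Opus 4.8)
The plan is to prove the three statements in the order \ref{lem_parrunprop1}, \ref{lem_parrunprop3}, \ref{lem_parrunprop2} by induction on $t$, with non-confluence of $\aut$ as the central tool throughout: by the discussion preceding Construction~\ref{cons_countpar}, on any finite input there is at most one run of $\aut$ ending in a given state, and hence every live state has at most one predecessor. Statements \ref{lem_parrunprop1} and \ref{lem_parrunprop3} together say that $S_0^t$ is exactly the set of endpoints of runs of $\aut$ on $w_0\cdots w_{t-1}$ that satisfy $\mathcal{O}$, and both are driven by a single invariant on the counters $d$; statement \ref{lem_parrunprop2} is the genuine breakpoint argument and will be the hard part.

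For \ref{lem_parrunprop1} and \ref{lem_parrunprop3} I would first establish the counter invariant: for every $t$ and every $q\in S_0^t$ there is a (by non-confluence unique) run $q_0\cdots q_t=q$ of $\aut$ on $w_0\cdots w_{t-1}$, and for each $j$ the value $d^t(q,j)$ counts the number of steps since that run last visited $F_j$ (so $d^t(q,j)=0$ exactly when $q\in F_j$). This follows by induction on $t$ from the recurrence for $d'$ in Construction~\ref{cons_countpar}, using Remark~\ref{rem_parstateprop}.\ref{rem_parstateprop3} to identify $S_0^t$ with $\{q\mid d^t(q,j)\neq\bot\text{ for all }j\}$; the same induction establishes the uniqueness of the predecessor $p\in S_0^{t-1}$ referred to in the construction. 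Given the invariant, \ref{lem_parrunprop1} is immediate: by Remark~\ref{rem_parstateprop}.\ref{rem_parstateprop1} we have $S_i^t\subseteq S_0^t$, so $q_t\in S_i^t$ yields a run ending in $q_t$, and since $d<b_j$ holds at every prefix endpoint along this run (each prefix endpoint lies in $S_0$), no $b_j$ consecutive states avoid $F_j$, i.e.\ the run satisfies $\mathcal{O}$. For \ref{lem_parrunprop3} I would run the dual induction: $q_0\in S_0^0$ because $d_0(q_0,j)=\bot$ only when $q_0\notin F_j$ and $b_j=1$, which would already violate $\mathcal{O}$; and if $q_t\in S_0^t$ then the only way to obtain $d^{t+1}(q_{t+1},j)=\bot$ is $q_{t+1}\notin F_j$ with $d^t(q_t,j)+1=b_j$, which by the invariant produces $b_j$ consecutive non-$F_j$ states ending in $q_{t+1}$, again contradicting $\mathcal{O}$.

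The core of the lemma is \ref{lem_parrunprop2}. Here I would first use the hypotheses to pin down the bookkeeping: since $S_{i'}^t\neq\emptyset$ for all $i'\le i$ and all $t\in[t_0,t_1]$, none of the positions $0,\dots,i$ is ever deleted by the left-shift, so the index $i$ is stable and the split of the list into the region ``$\le i$'' and the region ``$>i$'' (the \emph{younger sets}) is the same at every such $t$; moreover $m_{i'}^t=0$ for all $i'<i$ means no cleanup at a smaller position ever removes states from $S_i$, so on $[t_0,t_1]$ the set $S_i$ evolves only by taking live successors. The argument is then a tracking argument: assume for contradiction that the run $q_{t_0}\cdots q_{t_1}$ with $q_t\in S_i^t$ never visits $F$. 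Then $q_{t_0}\notin F$, and $m_i^{t_0}=1$ triggers removal of the states of $S_i^{t_0}$ from every younger set, so $q_{t_0}\notin\bigcup_{i'>i}S_{i'}^{t_0}$. I would then show by induction on $t\in[t_0,t_1]$ that $q_t\notin\bigcup_{i'>i}S_{i'}^{t}$: a younger set gains a member only by the append of $S_0\cap F$ or as a live successor of a younger-set state, and since $q_t\notin F$ and (by non-confluence) its unique predecessor $q_{t-1}$ is inductively not in any younger set, $q_t$ cannot enter the region ``$>i$''. But $m_i^{t_1}=1$ gives $S_i^{t_1}\setminus F\subseteq\bigcup_{i'>i}S_{i'}^{t_1}$, and $q_{t_1}\in S_i^{t_1}\setminus F$, so $q_{t_1}\in\bigcup_{i'>i}S_{i'}^{t_1}$ --- the desired contradiction. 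Hence the run visits $F$.

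The main obstacle will be \ref{lem_parrunprop2}, and specifically making the tracking invariant precise against the three list operations of Construction~\ref{cons_countpar} (left-shift, taking successors intersected with $T$, and appending $S_0\cap F$): I must argue that the boundary between positions $\le i$ and $>i$ is genuinely preserved under the hypotheses, that no non-$F$ state ever migrates from the left region into the younger region, and that non-confluence makes ``the run's current state lies in a younger set'' a property that can switch on only at an $F$-visit. The counter invariant underlying \ref{lem_parrunprop1} and \ref{lem_parrunprop3} is routine by comparison, the only delicate point being the boundary behaviour of the initialisation $d_0$ when some $b_j=1$.
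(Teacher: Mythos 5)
Your proposal is correct and takes essentially the same approach as the paper's proof: parts \ref{lem_parrunprop1} and \ref{lem_parrunprop3} are proved there by exactly your strengthened counter-invariant induction (reducing to $i=0$ via Remark~\ref{rem_parstateprop}.\ref{rem_parstateprop1}), and the paper's part \ref{lem_parrunprop2} is your tracking argument organized around the \emph{first} position $t'$ at which the run enters a younger set rather than your induction, with the identical case split --- entry through the appended $S_0\cap F$ forces an $F$-visit, while entry as successor of a younger-set state yields two distinct runs ending in the same state, contradicting non-confluence via part \ref{lem_parrunprop1}. The only point to tighten is the statement of your counter invariant, which, as in the paper, must also cover the case that the run has not yet visited $F_j$ at all, where $d^t(q_t,j)$ equals the run's length $t+1$; your subsequent uses of the invariant (run shorter than $b_j$, hence trivially no bad infix) already handle this case implicitly.
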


\begin{proof}

\ref{lem_parrunprop1}.) We show a stronger statement by induction over $t$:  if $q_t\in S_i^t$ for some $i$, then there exists a
finite run $q_0 q_1 \dots q_t$ of $\aut$ on $w_0w_1\dots w_{t-1}$
that satisfies~$\mathcal{O}$ and for every $j$ in the range $0\le j\le k$ we have $d^{t}(q_t,j)=\min\{t-t'\mid t'\le t \text{ and } q_{t'}\in F_j\}$ or (in case there is no such $q_{t'}\in F_j$) we have $d^t(q_t,j)=|q_0q_1\cdots q_t|=t+1$. Uniqueness of the run is then implied by non-confluence of $\aut$.

Due to Remark~\ref{rem_parstateprop}.\ref{rem_parstateprop1} it suffices to consider
$i=0$. The claim holds for $t=0$ by definition
of $q_0'$. Now, let $t>0$: as $q_t\in S_{0}^t$, there is a unique (due to
non-confluence) state $q_{t-1}\in S_0^{t-1}$ such that $(q_{t-1},w_{t-1},q_t)\in\Delta$.
Applying the inductive hypothesis, we obtain a run $q_0q_1\dots q_{t-1}$ of
$\aut$ on $w_0w_1\dots w_{t-2}$ that satisfies~$\mathcal{O}$
and $d^{t-1}(q_{t-1},j)=\min\{(t-1)-t'\mid t'\le t-1 \text{ and } q_{t'}\in F_j\}$ or  $d^{t-1}(q_{t-1},j)=|q_0q_1\cdots q_{t-1}|=t$.
Furthermore, Remark \ref{rem_parstateprop}.\ref{rem_parstateprop3} yields $d^t(q_t,j) < b_j$.
We consider two cases: if $q_t\in F_j$, then $q_0q_1\cdots q_t$ satisfies~$\mathcal{O}$ and
we have $d^{t}(q_t,j)=0$, by definition of $d^t$, which is equal to $\min\{t-t'\mid t'\le t \text{ and } q_{t'}\in F_j\}$.
Now, suppose $q_t\notin F_j$. Then, we have $d^{t-1}(q_{t-1},j)<b_j-1$, since we have $d^t(q_t,j)=d^{t-1}(q_{t-1},j)+1<b_j$ by the definition of $d^t$ in case $q_t\notin F_j$. We consider the two choices for the value of $d^{t-1}(q_{t-1},j)$. If
$d^{t-1}(q_{t-1},j)=\min\{(t-1)-t'\mid t'\le t-1 \text{ and } q_{t'}\in F_j\}<b_j-1$,
then the suffix of $q_0q_1\cdots q_{t-1}$ of length $b_j-1$, contains a vertex from $F_j$. Thus, also the suffix of $q_0q_1\cdots q_{t}$ of length $b_j$ contains a vertex from $F_j$ and hence
$d^{t}(q_t,j)=\min\{(t-1)-t'\mid t'\le t-1 \text{ and } q_{t'}\in F_j\}+1 = \min\{t-t'\mid t'\le t \text{ and } q_{t'}\in F_j\}$ and $q_0q_1\cdots q_t$ satisfies~$\mathcal{O}$, since the induction hypothesis applies to every infix but the last one, which has a vertex from $F_j$.
Otherwise, if $d^{t-1}(q_{t-1},j)=|q_0q_1\cdots q_{t-1}|=t<b_j-1$, then $d^t(q_t,j)=t+1=|q_0q_1\cdots q_t|$ by definition of $d^t$. Then, $q_0q_1\cdots q_t$ trivially satisfies~$\mathcal{O}$, as it has no infix of length $b_j$.

\ref{lem_parrunprop2}.) We assume $q_{t_1}\notin F$, since we are done otherwise.
We have $q_{t_0}\notin S_{i'}^{t_0}$ for every $i'>i$, due to
$m_i^{t_0}=1$, which means all states from $S_i^{t_0}$ are deleted from the sets $S_{i'}^{t_0}$
for every $i'>i$. Let $t'$ in the range $t_0<t'\le t_1$ be the first position such
that $q_{t'}\in\bigcup_{i'=i+1}^{n}S_{i'}^{t'}$. Such a position exists, as
we have $m_i^{t_1}=1$, which implies $q_{t_1}\in S_{i'}^{t_1}$ for some $i'>i$.
Since $q_{t'}\in S_{i'}^{t'}$, either
$q_{t'}\in\Delta(S_{i'}^{t'-1},w_{t'-1})$ or $q_{t'}\in
\Delta(S_0^{t'-1},w_{t'-1})\cap F$. Thus, it suffices to derive a contradiction
in the first case: $q_{t'}\in\Delta(S_{i'}^{t'-1},w_{t'-1})$ implies the
existence of a $p\in S_{i'}^{t'-1}$ such that $(p,w_{t'-1},q_{t'})\in\Delta$. We
have $p\not=q_{t'-1}$ due to the minimality of the position $t'$. But then
Lemma~\ref{lem_parrunprop}.\ref{lem_parrunprop1} yields two different runs of
$\aut$ from $q_0$ to $q_{t'}$ on $w_0\dots w_{t'-1}$, which gives the desired
contradiction to the non-confluence of $\aut$.

\ref{lem_parrunprop3}.) Again, we show a stronger statement by induction over $t$: let $q_0q_1q_2\dots$ be a run of $\aut$ on
$w_0w_1w_2\dots$ that satisfies~$\mathcal{O}$. Then, for every $t$ we have
$q_t\in S_0^t$ and for every $j$ in the range $1\le j \le k$ we have $d^t(q_t,j)=\min\{t-t'\mid t'\le t\text{ and } q_{t'}\in
F_j\}$ or (in case there is no such $q_{t'}\in F_j$) we have $d^t(q_t,j)=|q_0q_1\cdots q_t|=t+1$.
Note that this statement is only well-defined for a non-confluent automaton.

The induction start $t=0$ follows from the definition of $q_0'$. Now, let $t>0$: the induction hypothesis yields $q_{t-1}\in S_0^{t-1}$ and for every $j$ in the range $1\le j \le k$ we have $d^{t-1}(q_{t-1},j)=\min\{(t-1)-t'\mid t'\le {t-1}\text{ and } q_{t'}\in
F_j\}$ or $d^{t-1}(q_{t-1},j)=|q_0q_1\cdots q_{t-1}|=t$.
We consider two cases. If $q_t\in F_j$, then we have $q_t\in S^t_0$ and $d^t(q_t,j)=0$, which is equal to $\min\{t-t'\mid t'\le t\text{ and } q_{t'}\in F_j\}$ by definition of $S_0^t$ and $d^t$.
Otherwise, if $q_t\notin F$, then we have $d^{t-1}(q_{t-1},j)<b_j-1$, by induction hypothesis and the fact that $q_0q_1\cdots q_{t-1}$ satisfies~$\mathcal{O}$. Due to Remark \ref{rem_parstateprop}.\ref{rem_parstateprop3}, it suffices to show $d^t(q_t,j)<b_j$.
We consider the two choices for the value of $d^{t-1}(q_{t-1},j)$.
If $d^{t_1}(q_{t-1},j)=\min\{(t-1)-t'\mid t'\le {t-1}\text{ and } q_{t'}\in
F_j\}<b_j-1$, then $d^{t}(q_t,j)=\min\{(t-1)-t'\mid t'\le t-1 \text{ and } q_{t'}\in F_j\}+1 = \min\{t-t'\mid t'\le t \text{ and } q_{t'}\in F_j\}<b_j$. On the other hand, if  $d^{t-1}(q_{t-1},j)=|q_0q_1\cdots q_{t-1}|=t<b_j-1$, then $d^{t}(q_{t},j)=t+1=|q_0q_1\cdots q_{t}|<b_j$.
\qedhere
\end{proof}

We are now able to prove the correctness of Construction~\ref{cons_countpar}.
Our proof proceeds along the lines of the proof for the original construction
without counters~\cite{MS08}.

\begin{lemma}\label{lem_parcorrect} Let $\aut=(Q,\Sigma, q_0,\Delta,F)$ be a non-confluent
B\"uchi automaton, let $\{(F_1,b_1),\dots,(F_k,b_k)\}\subseteq
2^{Q}\times\nats_+$, and let $\autp$ be the
deterministic parity automaton obtained from Construction~\ref{cons_countpar}.
\begin{enumerate}
\item\label{lem_parcorrect1} $\autp$ accepts $w$ if and only if $\aut$ has an accepting run $\rho$ on $w$ such
that every $F_j$ is visited at least once in every infix of $\rho$ of length
$b_j$.
\item\label{lem_parcorrect2} $|\autp|\le 2^{(|\aut|+1)^2}\cdot \left(\Pi_{j=1}^k(b_j+1)\right)^{|\aut|}$ and
$|c(Q')|=2|\aut|+1$.
\end{enumerate}
\end{lemma}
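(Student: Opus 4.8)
The plan is to prove Lemma~\ref{lem_parcorrect} in two parts: first the language equivalence (item~\ref{lem_parcorrect1}), then the size and color bounds (item~\ref{lem_parcorrect2}). For the correctness direction, I would lean heavily on the run-property lemmas just established. The automaton $\autp$ is a counter-augmented Safra-style (here, Muller--Schewe-style) construction for non-confluent B\"uchi automata, so the state sets $S_0,\dots,S_n$ track the subset-tree of simulated runs while the counters $d(q,j)$ enforce the spacing constraints $\mathcal{O}=\{(F_1,b_1),\dots,(F_k,b_k)\}$. The key observation is that, by Lemma~\ref{lem_parrunprop}.\ref{lem_parrunprop1} and \ref{lem_parrunprop3}, a state $q$ belongs to $S_0^t$ exactly when the (unique, by non-confluence) run of $\aut$ reaching $q$ after reading $w_0\cdots w_{t-1}$ satisfies~$\mathcal{O}$; the counters abort precisely those runs that violate some $(F_j,b_j)$ constraint. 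So the constraints are already handled by the counter mechanism, and what remains is to verify that the $m_i$-markers and the priority function $c$ correctly detect whether some surviving run visits $F$ infinitely often.

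For the forward direction of \ref{lem_parcorrect1}, I would assume $\autp$ accepts $w$, so the minimal priority seen infinitely often is even, say $2m$, coming from index $m$ being the minimal $i$ with $m_i=1$ infinitely often while $e>m$ there. I would then extract an accepting constrained run of $\aut$ by tracking the list position $i=m$: the three conditions in Lemma~\ref{lem_parrunprop}.\ref{lem_parrunprop2} are met along a suitable infinite subsequence of times, and each time $m_m=1$ recurs, that lemma yields that any surviving run staying in $S_m$ across the interval visits $F$. A K\"onig's-lemma argument on the finitely-branching tree of surviving runs (each uniquely identified by its last state, by non-confluence) then produces a single infinite run in $\bigcap_t S_0^t$ visiting $F$ infinitely often; by Lemma~\ref{lem_parrunprop}.\ref{lem_parrunprop1} it satisfies~$\mathcal{O}$, giving the desired accepting constrained run. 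For the converse, given an accepting run of $\aut$ satisfying~$\mathcal{O}$, Lemma~\ref{lem_parrunprop}.\ref{lem_parrunprop3} keeps it in $S_0^t$ forever, and I would argue that this run cannot be permanently absorbed into a higher list index without its host index being cleaned infinitely often, forcing the minimal infinitely-recurrent $m_i=1$ index to sit below the first empty set $e$, hence an even minimal priority.

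For item~\ref{lem_parcorrect2}, the size bound is a direct counting argument. A state consists of the list $((S_0,m_0),\dots,(S_n,m_n))$ together with the counter function $d$. There are $n+1$ pairs $(S_i,m_i)$ with $S_i\subseteq Q$ and $m_i\in\{0,1\}$, contributing $(2\cdot 2^{n})^{n+1}=2^{(n+1)^2}$ with $n=|\aut|$, matching the $2^{(|\aut|+1)^2}$ factor. The counter $d$ ranges over functions $Q\times\{1,\dots,k\}\to\nats\cup\{\bot\}$ with $d(q,j)<b_j$ or $d(q,j)=\bot$, so for each of the $|\aut|$ states $q$ there are $\prod_{j=1}^k(b_j+1)$ choices, yielding the factor $\bigl(\prod_{j=1}^k(b_j+1)\bigr)^{|\aut|}$. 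The color bound follows from the definition of $c$: priorities take values in $\{1\}\cup\{2m\mid 1\le m\le n\}\cup\{2e-1\mid 0\le e\le n\}$, and counting the distinct attainable even and odd values gives $2|\aut|+1$. The main obstacle will be the infinitary argument in the forward direction of \ref{lem_parcorrect1}: correctly marrying the priority semantics (minimal index marked $1$ infinitely often, below the first empty set) with the K\"onig-style extraction of a single genuinely accepting constrained run, making sure the cleaning of state sets does not silently discard the witnessing run. Here I would rely on the structural guarantees of Remark~\ref{rem_parstateprop} and the exact characterization of counters in Lemma~\ref{lem_parrunprop} to control which runs survive.
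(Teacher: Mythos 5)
Your plan is correct and follows the paper's proof essentially step for step: the forward direction (K\"onig's lemma applied along the index realizing the recurring minimal even priority, a prefix supplied by Lemma~\ref{lem_parrunprop}.\ref{lem_parrunprop1}, the $F$-visits from Lemma~\ref{lem_parrunprop}.\ref{lem_parrunprop2}, and satisfaction of $\mathcal{O}$ again from Lemma~\ref{lem_parrunprop}.\ref{lem_parrunprop1}), the converse via Lemma~\ref{lem_parrunprop}.\ref{lem_parrunprop3} plus stabilization of the tracking index, and the direct counting for item~\ref{lem_parcorrect2} are all exactly the paper's arguments. The only place you are thinner than the paper is the converse's stabilization step: the paper makes your ``cannot be permanently absorbed into a higher index'' claim precise by iterating over indices $i_1<i_2<\cdots$ and invoking Remark~\ref{rem_parstateprop}.\ref{rem_parstateprop2} to conclude that once all lower sets stay non-empty the top-most tracking set is a singleton, hence marked at every $F$-visit of the run, which is the justification still needed for your assertion that some $m_i=1$ recurs infinitely often below the first empty index.
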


\begin{proof}
\ref{lem_parcorrect1}.)
Let $q_0'q_1'q_2'\dots$ be an accepting run of $\autp$ on $w$, with
$q_t'=((S_0^t,m_0^t),\dots,(S_{n}^t,m_{n}^t),d^t)$. Then, there exists a
position $t_0$ and an $i$ such that $c(q_{t}')=2i$ for infinitely many $t$ and
$c(q_t')\ge 2i$ for every $t\ge t_0$. Thus, $S_{i'}^t\not=\emptyset$ for every
$t\ge t_0$ and every $i'\le i$ and $m_{i'}^t=0$ for every $t\ge t_0$ and every
$i'<i$. Since $S^{t+1}_i$ is a non-empty subset of $\Delta(S_i^{t},w_{t})$ for every $t\ge t_0$, K\"onig's Lemma yields an infinite
run $q_{t_0}q_{t_0+1}q_{t_0+2}\dots$ (not necessarily starting in an initial
state) of $\aut$ on $w_{t_0}w_{t_0+1}w_{t_{0}+2}\dots$ such that $q_t\in S_i^t$
for every $t\ge t_0$. Furthermore, there exists a finite run of $\aut$ on
$w_0\dots w_{t_0-1}$ starting in an initial state and ending in $q_{t_0}$ due to
Lemma~\ref{lem_parrunprop}.\ref{lem_parrunprop1}. These runs can be concatenated
to an infinite run $q_0q_1q_2\dots$ of $\aut$ on $w$ such that $q_t\in S_0^t$
for every $t$. Hence, $q_0q_1q_2\dots$ satisfies~$\mathcal{O}$ due to
Lemma~\ref{lem_parrunprop}.\ref{lem_parrunprop1}.
Let $t_1<t_2<t_3<\cdots$ be the positions after $t_0$ such that
$c(q_{t_s}')=2i$, i.e., $m_i^{t_s}=1$.
The run $q_0q_1q_2\dots$ is accepting due to
Lemma~\ref{lem_parrunprop}.\ref{lem_parrunprop2}, as the run visits an accepting
state in between any $t_s$ and $t_{s+1}$, of which there are infinitely many.

Now, let $q_0q_1q_2\dots$ be an accepting run of $\aut$ on $w$ that satisfies
$\mathcal{O}$ and let $q_0'q_1'q_2'\dots$ be the run of $\autp$ on $w$ with
$q_t'=((S_0^t,m_0^t),\dots,(S_{n}^t,m_{n}^t),d^t)$. We have $q_t\in S_0^t$ for
every $t$ due to Lemma~\ref{lem_parrunprop}.\ref{lem_parrunprop3}. Assume there are only
finitely many $t$ such that $m_0^t=1$. Then, there is a minimal index $i_1$ such that an
infinite suffix of $q_0q_1q_2\dots$ is tracked by $S_{i_1}$ and
$S_{i'}^t\not=\emptyset$ for every $i'\le i_1$ from some point onwards. This is due to the fact that for every $t$ with $q_t\in F$ the set $S_0\cap F$ (which contains $q_t$) is appended to the list of state sets. Furthermore, this set can be moved to the left (in case other sets are empty) only a finite number of times. Finally, if the state $q_t$ is deleted from this set, then there is a smaller set which tracks this run, for which the same reasoning applies.
Again, assume there are only
finitely many $t$ such that $m_{i_1}^t=1$. Then, there exists a minimal
index $i_2>i_1$ such that an infinite suffix of $q_0q_1q_2\dots$ is tracked by
$S_{i_2}$ and $S_{i'}^t\not=\emptyset$ for every $i'\le i_2$ from some point
onwards. This can be iterated until we have that the sets $S_{n-1}^t$ track the
suffix of $q_0q_1q_2\dots$ and all smaller sets are always non-empty. But as
$S_{n-1}^t$ is in this situation always a singleton (see 
Remark~\ref{rem_parstateprop}.\ref{rem_parstateprop2}), it gets marked every
time an accepting state is visited by $q_0q_1q_2\dots$. Hence, the run of
$\autp$ on $w$ is accepting.

\ref{lem_parcorrect2}.) Clear.\qedhere
\end{proof}

The Lemmata~\ref{lem_gnba},~\ref{lem_nba}, and~\ref{lem_parcorrect} imply the
existence of a deterministic parity automaton with the properties required in Lemma~\ref{lem_autsize}. Hence, this finishes the proof
of Theorem~\ref{thm_pltlopt}. To compute a finite-state strategy realizing the optimal value (witnessed by a valuation $\alpha$) in a $\pltlf$ game with winning condition $\phi$, one has to compute a deterministic parity automaton recognizing the $\omega$-words $w$ satisfying $\phi_\alpha$, as explained above Lemma~\ref{lem_autsize}. Dually, in a $\pltlg$ game with winning condition $\phi$, one computes a deterministic parity automaton recognizing the $\omega$-words $w$ satisfying $\neg\phi_\alpha$, which is then complemented by incrementing the priorities. This complement automaton is a memory structure for the $\pltlg$ game.

\section{Conclusion}
\label{sec_conc}
We presented
$\twoexp$-algorithms for computing optimal strategies in a $\pltl$ game and to
 determine whether a given player wins with respect to some, infinitely  many,
or all variable valuations.
%
%
%
%
%
%
%
%
%
The decision problems for $\prompt$ and $\pltl$ (with the exception of the finiteness problem
for $\pltl$) are decidable by solving a single $\ltl$ game of the same size. Hence, adding parameterized
operators does not increase the asymptotic computational complexity of solving these games.
Furthermore, even the optimization problems for unipolar games can be solved in doubly-exponential time, so they are of the same computational complexity as solving $\ltl$ games. However, it takes an exponential number of parity games to solve to determine an optimal strategy. It is open whether this can be improved.

An interesting open question
concerns the tradeoff between the size of a finite-state strategy and the quality of the bounds it is winning for.
\smallskip

\noindent\textbf{Acknowledgments.} The author wants to thank Marcin Jurdzi\'nski, Christof L\"oding, Andreas Morgenstern, and Wolfgang Thomas for helpful discussions, and Roman Rabinovich for coming up with the name \emph{blinking semantics}. Also, valuable comments on an earlier paper~\cite{Z10}
by anonymous referees are gratefully acknowledged.

\bibliographystyle{eptcs}
\bibliography{biblio}{}

\end{document}